%
%
%
%
%
%
\RequirePackage{fix-cm}
\documentclass[smallextended,twocolumn]{svjour2}       
\smartqed  
\usepackage[english]{babel}
\usepackage{graphicx}

\usepackage{amsmath}
\usepackage{amssymb}
\usepackage{amsfonts}
\usepackage{graphicx}
\usepackage{times}
\usepackage{mathptmx} 
\usepackage{datetime}

\usepackage{tikz}
\usetikzlibrary{arrows}

\setlength{\textwidth}{165truemm}
\setlength{\textheight}{240truemm}
\setlength{\oddsidemargin}{-3truemm}
\setlength{\evensidemargin}{-3mm}
\setlength{\topmargin}{-10truemm}

\usepackage{datetime}
\usepackage{times} 
\usepackage{framed} 
\usepackage{graphicx} 
\usepackage{amsmath} 
\usepackage{amssymb}  

%
%
%


\DeclareMathAlphabet{\bit}{OML}{cmm}{b}{it}

\def\bTheta{\bit{\Theta}}

\newcommand{\ad}{\mathrm{ad}}           
\def\<{\leqslant}           
\def\>{\geqslant}           

\def\d{\partial}

\def\wt{\widetilde}

\def\Re{{\rm Re}}   
\def\rprod{\mathop{\overrightarrow{\prod}}}
\def\lprod{\mathop{\overleftarrow{\prod}}}

\def\mR{{\mathbb R}}    
\def\mC{\mathbb{C}}    

\def\Tr{\mathrm{Tr}}       
\def\rT{\mathrm{T}}        
\def\diam{\diamond}       





\def\[[[{[\![\![}
\def\]]]{]\!]\!]}


\def\re{{\rm e}}        
\def\rd{{\rm d}}        


\def\bEq{\mathbf{E}_{\rm q}}

\def\bEc{\mathbf{E}_{\rm c}}

\def\bJ{\mathbf{J}}

\def\br{{\bf r}}
\def\x{\times}
\def\ox{\otimes}

\def\fA{\mathfrak{A}}
\def\fB{\mathfrak{B}}
\def\fC{\mathfrak{C}}

\def\fH{\mathfrak{H}}

\def\cZ{\mathcal{Z}}

\def\cW{\mathcal{W}}

\def\cX{\mathcal{X}}

\def\cI{\mathcal{I}}

\def\cE{\mathcal{ E}}

\def\eps{\epsilon}

\def\Ups{\Upsilon}

\def\diag{\mathop{\mathrm{diag}}}    
\def\blockdiag{\mathop{\mathrm{blockdiag}}}    

%
\journalname{}

\begin{document}
\title{Parametric randomization,  complex symplectic factorizations, and quadratic-exponential functionals for Gaussian quantum states$^*$
\thanks{\noindent
$^*$Support:
the Air Force Office of Scientific Research (AFOSR) under agreement number FA2386-16-1-4065 and the Australian
Research Council under grant DP180101805.
Affiliation: Research School of Engineering, College of Engineering and Computer Science, Australian National University, Canberra, Acton, ACT 2601, Australia.
}
}

\titlerunning{Parameter randomization and quadratic-exponential functionals in Gaussian quantum states}        

\author{Igor~G.~Vladimirov\and
Ian~R.~Petersen \and
Matthew~R.~James
}

\authorrunning{Igor~G.~Vladimirov,
Ian~R.~Petersen,
Matthew~R.~James
} 

\institute{Igor G. Vladimirov \at
              \email{igor.g.vladimirov@gmail.com}           
           \and
           Ian R. Petersen \at
           \email{i.r.petersen@gmail.com}
           \and
           Matthew R. James \at
           \email{matthew.james@anu.edu.au}
}

\date{}

\maketitle

\begin{abstract}
This paper combines
probabilistic and
algebraic techniques for computing quantum expectations of operator exponentials (and their products) of quadratic forms of quantum variables in Gaussian states. Such qu\-ad\-ra\-tic-exponential functionals (QEFs) resemble quantum statistical mechanical  partition functions with qu\-ad\-ra\-tic Hamiltonians and are also used as performance criteria in quantum risk-sensitive filtering and control problems for linear quantum stochastic systems. We employ a Lie-algebraic correspondence between complex symplectic matrices  and quadratic-exponential functions of system variables of a quantum harmonic oscillator. The complex symplectic factorizations are used together with a parametric randomization of the quasi-characteristic or moment-generating functions
according to an auxiliary classical Gaussian distribution. This reduces the QEF to an exponential moment of a quadratic form of classical Gaussian random variables with a complex symmetric matrix and is applicable to recursive computation of such moments.
\keywords{Gaussian quantum state \and
Quadratic-expo\-nen\-tial functional \and
Complex symplectic factorization \and
Parameter randomization.
}
\subclass{
81S25,   	
81S30,       
81S05,      
81S22,       
81P16,   	
81P40,   	
81Q93,   	
81Q10,   	
60G15.   	
}
\end{abstract}

\section{Introduction}
\label{intro}

Quantum probabilistic analysis of physical systems, governed by the laws of quantum mechanics, is often concerned with expectations of nonlinear functions of system variables, which are organised as linear operators on an underlying Hilbert space
\cite{M_1995}. When the quantum variables play the role of infinitesimal generators of semigroups, the  generalised moments  involve operator exponentials. For example, in Schwinger's theory of angular momentum \cite{S_1994}, this gives rise to exponentials of bilinear forms of the quantum-mechanical position and momentum operators (with the matrices of these forms being specified by the Levi-Civita symbols from the cross-product of vectors).  Quadratic-exponential functions of quantum variables are also  present in the Gibbs-Boltzmann equilibrium density operators of quantum statistical mechanics \cite{BB_2010} for systems with quadratic Hamiltonians. 

Such systems include quantum harmonic oscillators (QHOs) and their open versions (OQHOs) \cite{GZ_2004} interacting with the environment,
which can involve
quantum fields (for example, nonclassical light), other quantum systems or classical measuring devices, controllers and actuators.
These systems  and their interconnections are usually described in terms of linear quantum stochastic differential equations (QSDEs) and other tools of the Hudson-Parthasarathy  quantum stochastic calculus \cite{H_1991,HP_1984,P_1992,P_2015}.   In combination with quantum feedback network theory \cite{GJ_2009,JG_2010}, this approach provides an important modelling paradigm in linear quantum systems theory \cite{P_2017,WM_2010,ZJ_2012}. The latter is concerned with quantum filtering and control problems \cite{B_1983,B_2010,BH_2006,BVJ_2007,EB_2005,J_2004,J_2005,JNP_2008,NJP_2009,WM_2010} aiming to achieve stability, optimality, robustness and other control objectives.

Quadratic-exponential moments of system variables provide variational performance criteria for open quantum control systems in such problems in the form of
quantum risk-sensitive cost functionals \cite{J_2004,J_2005}. Their minimization by an appropriate choice of feedback control secures a more conservative behaviour of the closed-loop system. This manifests itself in moderation of large deviations of the quantum trajectories \cite{VPJ_2018a} and robustness with respect to
system-environment state uncertainties described in terms of quantum relative entropy \cite{OP_1993,VPJ_2018b,YB_2009}. However, unlike the mean square cost functionals of classical linear-quadratic-Gaussian (LQG) cont\-rol\-lers \cite{KS_1972} and Kalman filters \cite{AM_1979}, whose main features are inherited by their counterparts in measurement-based quantum control and filtering problems, risk-sensitive performance criteria are not additive over time. Rather, they have  a multiplicative structure due to the presence of time-ordered exponentials or operator exponentials of integral quantities.
In view of the noncommutative nature of the quantum system variables (in contrast to classical risk-sensitive control \cite{BV_1985,J_1973,W_1981}) this poses a number of challenging problems in regard to the computation and minimization of such cost functionals. At the same time, it is the exponential penalization of the system variables that makes the quadratic-exponential moments advantageous (compared to the quadratic performance criteria)   in the sense of securing the robustness properties. These issues  are relevant to quantum computing \cite{NC_2000}, quantum optics \cite{WM_2008} and other applications, where controlled isolation and disturbance attenuation for quantum systems are of practical importance.

The present paper is concerned with the computation of quadratic-exponential moments for a finite number of quantum variables satisfying the Weyl canonical commutation relations (CCRs) \cite{F_1989}. This class of operators includes the quantum mechanical positions and momenta (and related annihilation and creation operators \cite{LL_1991,M_1998}) as system variables of a QHO.  We apply a combination of algebraic
and probabilistic techniques to computing the quantum expectations of operator exponentials (and their products) of quadratic forms of the self-adjoint quantum variables in Gaussian states \cite{KRP_2010,P_2015}. As mentioned above, such quadratic-exponential fu\-n\-ctionals (QEFs) resemble quantum statistical mechanical  partition functions
and are used as performance criteria in quantum risk-sensitive filtering and control problems.
We employ an isomorphism between the Lie algebras of complex Hamiltonian matrices and quadratic forms of the quantum variables specified by complex symmetric matrices. Due to Dynkin's representation \cite{D_1947} for the products of exponentials, this leads to a Lie-al\-geb\-ra\-ic correspondence between complex symplectic matrices  and quadratic-expo\-nen\-tial functions of the quantum variables. These complex symplectic factorizations are used together with a double averaging technique whi\-ch involves a parametric randomization of the quasi-cha\-racte\-ristic or moment-generating functions (MGFs)
according to an auxiliary classical Gaussian distribution. This reduces the QEF to an exponential moment of a quadratic form of classical Gaussian random variables with a complex symmetric matrix, which  can be used for recursive computation of such moments and continuous-time extensions of these techniques.
The approach of this paper is related to but differs from the factorization and averaging of quadratic-exponential functions of annihilation and creation operators considered in \cite{CG_2010} in the context of Schwinger's theory and in \cite[Section~3]{PS_2015} in regard to the MGFs for the number operators  in multi-mode QHOs. On a conceptual level, these results have common Lie-algebraic and quantum probabilistic roots (see \cite[Comment~3]{G_1974} and references therein).

The paper is organised as follows.
Section~\ref{sec:start} describes a double averaging technique for parameter-dependent quantum variables. This technique is applied in Section~\ref{sec:CCR} to quasi-characteristic functions of several quantum variables satisfying the Weyl CCRs.
Section~\ref{sec:multi} employs he multivariate parameter randomization for averaging the products of quadratic-exponential functions of such variables.
Section~\ref{sec:QEF} combines
this parameter randomization with the complex symplectic factorizations of Appendix \ref{sec:fact} for computing
the quadratic-exponential moment of Gaussian quantum variables.
Section~\ref{sec:two} outlines recurrence relations for quadratic-exponential functions of qu\-an\-tum variables.
Section~\ref{sec:conc} makes some concluding remarks.
Appendices \ref{sec:quadcom}, \ref{sec:prod}, \ref{sec:fact}  provide auxiliary lemmas on the commutator of quadratic forms of quantum variables satisfying CCRs, their Lie-algebraic isomorphism to complex Hamiltonian matrices, a product formula for quad\-ra\-tic-exponential functions of such variables based on Dynkin's representation, and complex symplectic factorizations for matrices of order two.

\section{A quantum-classical characteristic function identity}\label{sec:start}

As a starting point for what follows, we will describe an identity which involves double averaging.
Let $\xi$ be a self-adjoint quantum variable on (a dense domain of) a complex separable Hilbert space $\fH$ with the spectral decomposition
\begin{equation}
\label{xiS}
    \xi = \int_\mR x S(\rd x),
\end{equation}
where $S$ is a projection-valued measure on the $\sigma$-algebra $\fB$ of Borel subsets of the real line $\mR$ satisfying $S(A)S(B)\\ = S(A\bigcap B)$ for all $A,B\in \fB$, with $S(\mR) = \cI$ the identity operator on $\fH$.
Suppose $\fH$ is endowed with a density operator (quantum state) $\rho$ which is a positive semi-definite self-adjoint operator on $\fH$ with unit trace $\Tr \rho = 1$. Then for any complex-valued function $f$ on $\mR$ (or an interval containing the spectrum of $\xi$), the quantum expectation of the operator
$
    f(\xi) := \int_\mR f(x) S(\rd x)
$
over $\rho$ is given by
\begin{equation}
\label{Efxi}
    \bEq f(\xi)
    :=
    \Tr(\rho f(\xi))
    =
    \int_\mR f(x) D(\rd x).
\end{equation}
Here, $D$ is a probability measure related by
$D(A):= \bEq S(A)$ to the spectral measure $S$ in (\ref{xiS}), so that the self-adjoint operator $\xi$ alone can be regarded as a classical real-valued random variable with the distribution $D$ (see, for example, \cite{H_2001}). In particular, the characteristic function of $\xi$ takes the form
\begin{equation}
\label{cfquant}
    \Phi(u)
    :=
    \bEq \re^{iu \xi}
    =
    \int_\mR \re^{iu x}D(\rd x),
    \qquad
    u \in \mR,
\end{equation}
where $i:= \sqrt{-1}$ is the imaginary unit. In what follows, we will consider the averaging of (\ref{cfquant}) over the parameter $u$, which can be carried out by using
an independent classical real-valued random variable $\omega$ with a probability distribution $F$ and the characteristic function
\begin{equation}
\label{cfclass}
    \varphi(v)
    :=
    \bEc \re^{iv\omega}
    =
    \int_\mR
    \re^{iuv}
    F(\rd u),
    \qquad
    v \in \mR.
\end{equation}
Here, $\bEc(\cdot)$ denotes the classical expectation.\footnote{which can also be considered in the framework of an appropriately augmented quantum probability space $(\fH,\fA,\rho)$, where $\fA$ is a suitable $^*$-algebra of linear operators on $\fH$} The following lemma employs the commutativity between the quantum and classical averaging of parameter-dependent operators:
\begin{align}
\nonumber
    \bEq \bEc g(\omega, \xi)
    & =
    \bEc \bEq g(\omega, \xi)\\
\label{EEcomm}
    & =
    \int_{\mR^2}
    g(u,x)
    F(\rd u) D(\rd x),
\end{align}
where $g: \mR^2\to \mC$ is a complex-valued function on the plane, absolutely integrable over the probability product-measure $F\x D$, and use is made of Fubini's theorem together with (\ref{Efxi}) applied to the function $f(x) := \bEc g(\omega, x) \\ =  \int_\mR g(u,x) F(\rd u) $, and the relation $\bEq g(u, \xi) = \int_\mR g(u,x)\\ D(\rd x)$.

\begin{lemma}
\label{lem:cfuni}
The characteristic functions $\Phi$ and $\varphi$ of a self-adjoint quantum variable $\xi$ and an independent  classical real-valued random variable $\omega$ in (\ref{cfquant}) and (\ref{cfclass}), respectively, are related by
\begin{equation}
\label{EPhi}
    \bEq\varphi(\xi)
    =
    \bEc \Phi(\omega).
\end{equation}
\end{lemma}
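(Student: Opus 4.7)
The plan is to view the identity (\ref{EPhi}) as an immediate consequence of the double averaging relation (\ref{EEcomm}) applied to the bivariate complex exponential kernel
$$
    g(u,x) := \re^{iux},
    \qquad (u,x)\in \mR^2.
$$
First, I would verify that this choice of $g$ meets the hypotheses of (\ref{EEcomm}): $g$ is bounded by $1$ in modulus on $\mR^2$, and hence is absolutely integrable with respect to the probability product-measure $F\x D$. This secures the applicability of Fubini's theorem which underlies (\ref{EEcomm}).

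Next, I would compute the two partial averages of $g$. Fixing $x\in \mR$ and averaging over $\omega$ in view of (\ref{cfclass}) yields $\bEc g(\omega,x) = \bEc \re^{i\omega x} = \varphi(x)$, so that the inner-outer composition gives
$$
    \bEq\bEc g(\omega,\xi)
    =
    \bEq \varphi(\xi),
$$
by the spectral calculus (\ref{Efxi}) applied to the function $\varphi$ (which is continuous and bounded, hence yields a well-defined bounded operator $\varphi(\xi)$). Fixing $u\in \mR$ and averaging over $\xi$ in view of (\ref{cfquant}) gives $\bEq g(u,\xi) = \bEq \re^{iu\xi} = \Phi(u)$, whereby
$$
    \bEc\bEq g(\omega,\xi)
    =
    \bEc \Phi(\omega).
$$

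Finally, the commutativity of the two orders of averaging asserted in (\ref{EEcomm}) identifies the two right-hand sides, yielding (\ref{EPhi}). There is no real obstacle here beyond the bookkeeping of the two conventions of averaging; the only mildly delicate point is the justification that $\varphi(\xi)$, defined through (\ref{Efxi}), is indeed the operator whose quantum expectation equals $\bEc\Phi(\omega)$, and this is exactly what (\ref{EEcomm}) is designed to provide. Accordingly, the proof reduces to specializing the general double-averaging identity to the Fourier kernel and reading off the two resulting expressions.
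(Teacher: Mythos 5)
Your proposal is correct and coincides with the paper's own proof: both apply the double-averaging identity (\ref{EEcomm}) to the kernel $g(u,x)=\re^{iux}$ and read off $\bEq\varphi(\xi)$ and $\bEc\Phi(\omega)$ from the two orders of integration. The extra remarks on boundedness of $g$ and the well-definedness of $\varphi(\xi)$ via (\ref{Efxi}) are harmless elaborations of what the paper leaves implicit.
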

\begin{proof}
The relation (\ref{EPhi}) is obtained by applying (\ref{EEcomm}) to the function $g(u,x):= \re^{iux}$ and noting that $\bEq \re^{iu \xi} = \Phi(u)$ and $\bEc \re^{ix \omega} = \varphi(x)$ for all $u,x\in \mR$ in view of (\ref{cfquant}) and (\ref{cfclass}).
\end{proof}

The left-hand side of (\ref{EPhi}) is a generalised moment of the quantum variable $\xi$, whereas its right-hand side is a generalised moment of the classical random variable $\omega$. These two moments are specified by the characteristic functions $\varphi$ and  $\Phi$ and  the probability distributions $D$ and $F$.

In particular, if $\omega$ is a Gaussian random variable with zero mean and variance $\sigma^2$, then its characteristic function $\varphi$ in (\ref{cfclass}) takes the form
\begin{equation}
\label{cfclassgauss}
    \varphi(v) = \re^{-\frac{1}{2}\sigma^2 v^2},
\end{equation}
and substitution into (\ref{EPhi}) leads to
\begin{align}
\nonumber
    \bEq\re^{-\frac{1}{2} \sigma^2 \xi^2}
    & =
    \bEc \Phi(\omega)\\
\label{EPhiGauss}
    & =
    \frac{1}{\sqrt{2\pi} \sigma}
    \int_\mR
    \Phi(u)
    \re^{-\frac{u^2}{2\sigma^2}}
    \rd u,
\end{align}
where the integral applies to the case $\sigma >0$.
The left-hand side of (\ref{EPhiGauss}) is a quadratic-exponential moment of the quantum variable $\xi$ which is reduced on the right-hand side to a classical average of its characteristic function over a Gaussian distribution for the argument of the function. For example, if $\xi$ is in a zero-mean  Gaussian quantum state with a variance $\Sigma^2$, then, similarly to (\ref{cfclassgauss}), the characteristic function $\Phi$ in (\ref{cfquant}) takes the form
\begin{equation}
\label{cfquantgauss}
    \Phi(u) = \re^{-\frac{1}{2}\Sigma^2 u^2},
\end{equation}
which, in combination with (\ref{EPhiGauss}), yields
\begin{align}
\nonumber
    \bEq\re^{-\frac{1}{2} \sigma^2 \xi^2}
    & =
    \bEc \re^{-\frac{1}{2}\Sigma^2 \omega^2}\\
\nonumber
    & =
    \frac{1}{\sqrt{2\pi} \sigma}
    \int_\mR
    \re^{-\frac{1}{2}\big(\Sigma^2 + \frac{1}{\sigma^2}\big)u^2}
    \rd u\\
\label{EPhiGaussGauss}
    & =
    \frac{1}{\sqrt{1 + \Sigma^2 \sigma^2}}.
\end{align}
The quantum-classical characteristic function identity (\ref{EPhi}) and its Gaussian version in (\ref{cfclassgauss})--(\ref{EPhiGaussGauss}) are valid irrespective of the particular structure of the quantum variable $\xi$ or the Hilbert space $\fH$. At the same time, the case of a single self-adjoint operator considered above does not contain, in itself, much ``quantumness'' because of the isomorphism with a classical random variable. Nevertheless, the main idea of these identities (using the parameter randomization) admits an extension to a more general setting of several (and even infinitely many) noncommuting quantum variables without a joint probability distribution. More precisely, let $\xi_1, \ldots, \xi_n$ be self-ad\-joint quantum variables (on a dense domain of the underlying Hilbert space $\fH$), and let  $\omega_1, \ldots, \omega_n$ be independent classical real-valued random variables. Then \begin{align}
\nonumber
    \bEc &
    \bEq
    \rprod_{k=1}^n
    g_k(\omega_k,\xi_k)\\
\nonumber
    & =
    \int_{\mR^n}
    \bEq
    \rprod_{k=1}^n
    g_k(u_k,\xi_k)
    F_1(\rd u_1)\x \ldots \x F_n(\rd u_n)\\
\nonumber
    & =
    \int_{\mR^n}
    \Tr
    \Big(
    \rho
    \rprod_{k=1}^n
    g_k(u_k,\xi_k)
    \Big)
    F_1(\rd u_1)\x \ldots \x F_n(\rd u_n)\\
\nonumber
    & =
    \Tr
    \Big(
    \rho
    \rprod_{k=1}^n
    \int_\mR
    g_k(u_k,\xi_k)
    F_k(\rd u_k)
    \Big)\\
\label{EEEcomm}
    & =
    \bEq
    \rprod_{k=1}^n
    \bEc
    g_k(\omega_k,\xi_k).
\end{align}
Here, $g_1, \ldots, g_n:\mR^2 \to \mC$ are Borel measurable complex-valued functions on the plane, and $F_1, \ldots, F_n$ are the mar\-gi\-nal probability distributions  of $\omega_1, \ldots, \omega_n$. Also,
\begin{equation}
\label{xirprod}
    \rprod_{k=1}^n
    \zeta_k
    :=
    \zeta_1\x \ldots  \x \zeta_n
\end{equation}
denotes
the ``rightward'' product of elements $\zeta_1, \ldots, \zeta_n$ in a noncommutative ring (for example, quantum variables or complex matrices). The identity  (\ref{EEEcomm}) is valid at least in the case when the functions $g_1, \ldots, g_n$ are bounded.
Since  the quantum variables $\xi_1, \ldots, \xi_n$ do not commute with each other, in general, they do not have a classical joint  probability distribution on $\mR^n$, and the quantum expectations in (\ref{EEEcomm}) are not reducible to such a distribution (in contrast to the case of a single self-adjoint quantum variable in (\ref{Efxi})).

\section{Unitary Weyl operators and quasi-characteristic functions}\label{sec:CCR}

Consider $n$ self-adjoint quantum variables $X_1, \ldots, X_n$ on (a dense domain of) the underlying Hilbert space $\fH$, satisfying the Weyl  CCRs
\begin{equation}
\label{CCR}
    \cW_u \cW_v
    =
    \re^{i v^{\rT}\Theta u}
    \cW_{u+v},
    \qquad
    u,v\in \mR^n,
\end{equation}
where
$\Theta:= (\theta_{jk})_{1\< j,k\< n}$ is a real antisymmetric matrix of order $n$. Here, associated with the vector
\begin{equation}
\label{X}
    X:=
    \begin{bmatrix}
        X_1\\
        \vdots\\
        X_n
    \end{bmatrix}
\end{equation}
of the quantum variables (unless indicated otherwise, vectors are organised  as columns) are the unitary Weyl operators \cite{F_1989}
\begin{equation}
\label{cW}
  \cW_u
  :=
  \re^{iu^{\rT} X}
\end{equation}
on $\fH$
parameterised by vectors $u:=(u_k)_{1\< k\< n} \in \mR^n$.
The self-adjointness of the operator $u^{\rT}X = \sum_{k=1}^n u_k X_k$ implies the relation
\begin{equation}
\label{cWdagger}
    \cW_u^{\dagger} = \cW_{-u}
\end{equation}
for the adjoint of the Weyl operator in (\ref{cW}). It follows from (\ref{CCR}) that
\begin{equation}
\label{cWcom}
    [\cW_u, \cW_v]
    =
    -2i\sin(u^{\rT}\Theta v)
    \cW_{u+v},
\end{equation}
where $[\xi,\eta]:=\xi\eta - \eta \xi$ is the commutator of linear operators. When considered at the level of infinitesimal generators (for small values of $u$ and $v$), the relation
(\ref{cWcom}) yields
the Heisenberg  form of the Weyl CCRs (\ref{CCR}) described by the commutator matrix
\begin{align}
\nonumber
    [X, X^{\rT}]
      & :=    ([X_j,X_k])_{1\< j,k\< n}\\
\label{Theta}
     & =    XX^{\rT}- (XX^{\rT})^{\rT}
     =
     2i \Theta,
\end{align}
where the transpose $(\cdot)^{\rT}$ applies to a matrix with operator-valued entries as if they were scalars.
The entries $2i \theta_{jk}$ of the matrix on the right-hand side of (\ref{Theta}) represent the operators $2i \theta_{jk}\cI$.
By a standard convention on linear operators, the corresponding matrix $\Theta \ox \cI$ (where $\ox$ is the tensor product) is identified with $\Theta$. The Weyl CCRs (\ref{CCR}) can also be regarded as a particular case of the Baker-Campbell-Hausdorff formula \index{Baker-Campbell-Hausdorff formula} $\re^{\xi+\eta} = \re^{-\frac{1}{2}[\xi,\eta]}\re^{\xi} \re^{\eta}$ for operators $\xi$ and $\eta$ which commute with their commutator, that is, $[\xi,[\xi,\eta]] = [\eta,[\xi,\eta]] = 0$ (see, for example, \cite[pp. 128--129]{GZ_2004}). Indeed, (\ref{Theta}) leads to  $[iu^{\rT} X, \\ iv^{\rT} X] = -u^{\rT}[X,X^{\rT}] v = -2iu^{\rT}\Theta v$ which commutes with any operator.

The quantum expectation of the Weyl operator $\cW_u$ in (\ref{cW}) gives rise to the complex-valued quasi-characteristic function (QCF)
\begin{equation}
\label{Phi}
    \Phi(u):= \bEq \cW_u = \overline{\Phi(-u)},
    \qquad
    u \in \mR^n,
\end{equation}
where the second equality  describes the Hermitian property of $\Phi$ and follows from (\ref{cWdagger}). The relation $\cW_0 = \cI$ implies that $\Phi(0)=1$. Furthermore, the unitarity of $\cW_u$ leads to $|\Phi(u)|\< 1$ for all $u \in \mR^n$, similarly to the classical case. However, in view of the Weyl CCRs (\ref{CCR}) in the quantum mechanical setting, the Bochner-Khinchin positiveness criterion \cite{GS_2004}  for the characteristic functions of classical probability distributions is replaced with its modified version \cite{CH_1971,H_2010} in the form of  positive semi-definiteness of the complex Hermitian matrix  $\big(\re^{iu_j^{\rT}\Theta u_k}\\  \Phi(u_j-u_k)\big)_{1\<j,k\<N}$ for any $u_1, \ldots, u_N \in \mR^n$ and $N=1,2, 3,\ldots$.
In the case $\Theta\ne 0$,  the QCF $\Phi$ itself does not have to be positive \cite{H_1974}.

In order to extend the parameter randomization technique of Section~\ref{sec:start} to the multivariable case, we will need a factorization of the Weyl operator $\cW_u$ in (\ref{cW}) into univariate Weyl operators, which is given below for completeness (see also \cite[Appendix B]{VP_2012a} or \cite[Section 6]{V_2015c}). To this end, for any matrix $M:= (m_{jk})_{1\< j,k\< n}$, we denote by $M^\diam$ the  matrix of the same order with the entries
\begin{equation}
\label{diam}
    (M^\diam)_{jk}
    :=
    \left\{
        \begin{matrix}
            m_{jk} & & {\rm if} & j \< k\\
            m_{kj} & & {\rm if} & j > k
        \end{matrix}
    \right.,
\end{equation}
so that $M^\diam$ is a symmetric matrix which 
inherits its main diagonal and the upper triangular part (above the main diagonal) from $M$. Only symmetric matrices are invariant under the linear operation $(\cdot)^\diam$ (that is, $M^\diam = M$ is equivalent to $M^{\rT}=M$). Also, if $M$ is antisymmetric, then $M^{\diam}$ has zeros over the main diagonal.

\begin{lemma}
\label{lem:cWfact}
The Weyl operator  (\ref{cW}), associated with the self-adjoint quantum variables $X_1, \ldots, X_n$ satisfying the Weyl CCRs (\ref{CCR}), admits the factorization
\begin{equation}
\label{Baker}
    \cW_u
    =
    \re^{\frac{i}{2}u^{\rT} \Theta^{\diam} u}
    \re^{iu_1X_1}
    \x \ldots \x
    \re^{iu_nX_n}
\end{equation}
for any $u:=(u_k)_{1\< k\< n}\in \mR^n$.  Here, $\Theta^\diam$ is a real symmetric matrix of order $n$ defined by (\ref{diam}) in terms of the CCR matrix $\Theta$ in (\ref{Theta}).
\end{lemma}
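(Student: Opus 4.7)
The plan is to prove the factorization (\ref{Baker}) by induction on $n$, using the Baker-Campbell-Hausdorff identity
$\re^{\xi+\eta} = \re^{-\frac{1}{2}[\xi,\eta]}\re^{\xi}\re^{\eta}$, which is already recalled in the paragraph following (\ref{Theta}) and is applicable whenever $[\xi,\eta]$ is central (commutes with both $\xi$ and $\eta$). In our setting, the Heisenberg CCRs (\ref{Theta}) assert that $[X_j,X_k] = 2i\theta_{jk}\cI$, so every commutator between linear combinations $\sum_k a_k X_k$ is a scalar multiple of the identity and therefore central. Consequently, BCH applies freely to any decomposition of $iu^{\rT}X$ into two linear combinations of the $X_k$.

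For the base step $n=1$, both sides of (\ref{Baker}) equal $\re^{iu_1 X_1}$ since $\Theta^{\diam}$ has zero diagonal (the antisymmetry of $\Theta$ forces $\theta_{kk}=0$). For the induction step, set $\xi := i\sum_{k=1}^{n-1}u_k X_k$ and $\eta := iu_n X_n$; then
\begin{equation*}
[\xi,\eta] = -\sum_{k=1}^{n-1}u_k u_n [X_k,X_n] = -2i\sum_{k=1}^{n-1}u_k u_n \theta_{kn}
\end{equation*}
is a scalar, so BCH gives
\begin{equation*}
\cW_u = \re^{\xi+\eta} = \re^{i\sum_{k=1}^{n-1}u_k u_n \theta_{kn}}\,\re^{\xi}\,\re^{iu_n X_n}.
\end{equation*}
Applying the induction hypothesis to $\re^{\xi}$ and multiplying the resulting scalar prefactors yields a combined exponent of $i\sum_{1\< j<k\< n} u_j u_k \theta_{jk}$.

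It then remains to verify the bookkeeping identity
\begin{equation*}
\tfrac{1}{2}u^{\rT}\Theta^{\diam} u = \sum_{1\< j<k\< n} u_j u_k \theta_{jk},
\end{equation*}
which follows directly from the definition (\ref{diam}): since $\Theta$ is antisymmetric, $\Theta^{\diam}$ has zero diagonal, equals $\theta_{jk}$ strictly above the diagonal and $-\theta_{jk}$ strictly below, so the contributions from the two off-diagonal triangles coincide and together account for the factor of $2$. Combining this with the induction step completes the proof of (\ref{Baker}).

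The only real subtlety is ensuring that BCH is legitimately applicable; this is the step that would otherwise be an obstacle, but it is resolved once one observes the centrality of the commutators $[X_j,X_k]$ as a direct consequence of (\ref{Theta}). Everything else is a straightforward induction together with the symmetry/antisymmetry accounting for $\Theta^{\diam}$.
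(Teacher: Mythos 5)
Your proof is correct and follows essentially the same induction as the paper's: split off the factor $\re^{iu_nX_n}$, accumulate the resulting scalar phases, and identify the total phase $i\sum_{1\le j<k\le n}\theta_{jk}u_ju_k$ with $\frac{i}{2}u^{\rT}\Theta^{\diam}u$ via the definition (\ref{diam}). The only (cosmetic) difference is that you derive the single-step commutation phase from the Baker--Campbell--Hausdorff identity applied to the Heisenberg form (\ref{Theta}), whereas the paper obtains it by invoking the assumed Weyl relation (\ref{CCR}) directly with $v=u_ne_n$; for the unbounded operators $X_k$ these carry the same content (as the paper itself notes), and citing (\ref{CCR}) is the marginally cleaner route since it sidesteps any domain questions about exponentiating sums of unbounded generators.
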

\begin{proof}
The factorization (\ref{Baker}) can be obtained by repeatedly applying the Weyl CCRs (\ref{CCR}), from which it follows by induction  that
\begin{align}
\nonumber
    \cW_u
      & =
      \re^{i\sum_{j=1}^{n-1}u_j e_j^{\rT}\Theta e_n u_n }
      \cW_{\sum_{j=1}^{n-1}u_j e_j}
      \cW_{u_n e_n}\\
\nonumber
      &=
    \rprod_{k=1}^n
    \re^{i\sum_{j=1}^{k-1}u_j e_j^{\rT}\Theta e_ku_k }
    \cW_{u_k e_k}\\
\nonumber
        & =
    \re^{i\sum_{1\< j< k \< n}\theta_{jk} u_ju_k}
    \rprod_{k=1}^n
    \cW_{u_ke_k}\\
\label{Baker1}
    & =
    \re^{\frac{i}{2}u^{\rT} \Theta^\diam u}
    \rprod_{k=1}^n
    \cW_{u_ke_k}
\end{align}
for any $u:=(u_k)_{1\< k\< n}\in \mR^n$, where
$$
    e_k:= [\underbrace{0,\ldots, 0}_{k-1\ {\rm zeros}}, 1, \underbrace{0,\ldots, 0}_{n-k\ {\rm zeros}}]^{\rT}
$$
is  the $k$th standard basis vector in $\mR^n$, so that $\cW_{u_k e_k} = \re^{iu_k X_k}$, and (\ref{Baker1}) establishes (\ref{Baker}). Here, use is made of (\ref{xirprod}) and the identity
$
    \sum_{1\< j< k \< n}
    \theta_{jk}
    u_ju_k
    =
    \frac{1}{2}
    u^{\rT}\Theta^\diam u
$,
where, in view of (\ref{diam}),
the real symmetric matrix $\Theta^\diam$
inherits the upper off-diagonal entries and the zero main diagonal from $\Theta$.
\end{proof}

By taking the quantum expectation on both sides of (\ref{Baker}) and recalling the QCF $\Phi$ from (\ref{Phi}), it follows that
\begin{equation}
\label{Phiprod}
    \bEq
    \rprod_{k=1}^n
    \re^{iu_k X_k}
    =
  \Phi(u)
    \re^{-\frac{i}{2}u^{\rT} \Theta^\diam u},
    \qquad
    u \in \mR^n.
\end{equation}
Although the left-hand side of (\ref{Phiprod}) is a special parameter-dependent moment of the quantum variables $X_1, \ldots, X_n$, this relation can be used for computing more general moments.

\section{Parameter randomization in the multivariable case}\label{sec:multi}

As in (\ref{EEEcomm}), let   $\omega_1, \ldots, \omega_n$ be independent classical real-valued random variables with characteristic functions $\varphi_1,\\ \ldots, \varphi_n$, respectively. They are assembled into a random vector   \begin{equation}
\label{omega}
    \omega
    :=
    \begin{bmatrix}
        \omega_1\\
        \vdots\\
        \omega_n
    \end{bmatrix}
\end{equation}
whose characteristic function is the tensor product of the marginal characteristic functions: $(\varphi_1 \ox \ldots \ox \varphi_n)(u) = \varphi_1(u_1)\x \ldots \x \varphi_n(u_n)$ for any $u:=(u_k)_{1\< k\< n}\in \mR^n$.
\begin{theorem}
\label{th:cfmulti}
For self-adjoint quantum variables $X_1, \ldots, \\ X_n$, which satisfy the Weyl CCRs (\ref{CCR}) and have the QCF $\Phi$ in (\ref{Phi}), and independent classical real-valued random variables $\omega_1, \ldots, \omega_n$ with characteristic functions $\varphi_1, \ldots, \varphi_n$, these functions are related by
\begin{equation}
\label{Ephiprod}
    \bEq
    \rprod_{k=1}^n
    \varphi_k(X_k)
    =
    \bEc
    \Big(
  \Phi(\omega)
    \re^{-\frac{i}{2}\omega^{\rT} \Theta^{\diam} \omega}
    \Big).
\end{equation}
Here, $\omega$ is the classical random vector from (\ref{omega}), and, as in Lemma~\ref{lem:cWfact}, the matrix $\Theta^\diam$ is associated by (\ref{diam}) with the CCR matrix $\Theta$ in (\ref{Theta}).
\end{theorem}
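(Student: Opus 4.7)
My plan is to derive the identity (\ref{Ephiprod}) by combining the multivariable double-averaging identity (\ref{EEEcomm}) with the QCF factorization (\ref{Phiprod}). The key observation is that if we choose the test functions to be the complex exponentials $g_k(u,x) := \re^{iux}$ for $k=1,\ldots,n$, then both sides of (\ref{EEEcomm}) take particularly useful forms: one reproduces the left-hand side of (\ref{Ephiprod}) and the other reproduces the right-hand side.

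First, I would verify that the hypotheses for (\ref{EEEcomm}) are satisfied: each $g_k$ is a bounded Borel-measurable function on $\mR^2$ (since $|\re^{iux}|=1$), so the classical-quantum interchange applies without integrability concerns. With this choice, the classical inner expectation on the right of (\ref{EEEcomm}) is $\bEc g_k(\omega_k, X_k) = \bEc \re^{i\omega_k X_k}$, which by the bounded Borel functional calculus applied to the spectral decomposition of the self-adjoint operator $X_k$ equals $\varphi_k(X_k)$ in view of (\ref{cfclass}) applied to $\omega_k$. Hence the right-hand side of (\ref{EEEcomm}) becomes exactly $\bEq \rprod_{k=1}^n \varphi_k(X_k)$, which is the left-hand side of (\ref{Ephiprod}).

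On the other side of (\ref{EEEcomm}), the quantum inner expectation $\bEq \rprod_{k=1}^n g_k(u_k, X_k) = \bEq \rprod_{k=1}^n \re^{iu_k X_k}$ is handled by the Weyl factorization in Lemma~\ref{lem:cWfact}; namely, equation (\ref{Phiprod}) gives
\begin{equation*}
    \bEq \rprod_{k=1}^n \re^{iu_k X_k}
    =
    \Phi(u)\re^{-\frac{i}{2}u^{\rT}\Theta^\diam u}
\end{equation*}
for all $u \in \mR^n$. Integrating this against the product measure $F_1(\rd u_1)\x \ldots \x F_n(\rd u_n)$ and recognising the result as the classical expectation over the random vector $\omega$ of (\ref{omega}) yields $\bEc \bigl(\Phi(\omega) \re^{-\frac{i}{2}\omega^{\rT}\Theta^\diam \omega}\bigr)$, which matches the right-hand side of (\ref{Ephiprod}).

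The main obstacle I anticipate is essentially bookkeeping rather than a genuine difficulty: one must ensure that the non-commutativity of the $X_k$ is respected in the ordered product (so the interchange in (\ref{EEEcomm}) preserves the ``rightward'' ordering $\rprod$), and that the scalar phase factor $\re^{-\frac{i}{2}u^{\rT}\Theta^\diam u}$ in (\ref{Phiprod}) can be pulled through the classical averaging without any commutation issues, which is immediate because it is a $\mC$-valued function of the classical parameter $u$ and hence commutes with everything. With these points addressed, chaining the two identities closes the proof.
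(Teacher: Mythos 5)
Your proposal is correct and follows essentially the same route as the paper: both arguments chain the classical--quantum expectation interchange (\ref{EEEcomm}) with the Weyl factorization identity (\ref{Phiprod}), differing only in whether one starts from (\ref{EEEcomm}) with $g_k(u,x)=\re^{iux}$ or from (\ref{Phiprod}) evaluated at $u=\omega$. The bookkeeping points you flag (boundedness of the $g_k$, preservation of the rightward ordering, and the scalar phase factor commuting through the classical average) are exactly the ones the paper relies on implicitly.
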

\begin{proof}
By evaluating both parts of (\ref{Phiprod}) at the random vector $u =\omega$ and applying the classical expectation, it follows that
\begin{equation}
\label{EE}
    \bEc\bEq
    \rprod_{k=1}^n
    \re^{i\omega_k X_k}
    =
    \bEc
    \big(
  \Phi(\omega)
    \re^{-\frac{i}{2}\omega^{\rT} \Theta^\diam \omega}
    \big).
\end{equation}
Since the classical and quantum expectations commute with each other, and the random variables $\omega_1, \ldots, \omega_n$ are independent, the left-hand side of (\ref{EE}) can be computed by using (\ref{EEEcomm}) as
\begin{equation}
\label{EEleft}
    \bEc\bEq
    \rprod_{k=1}^n
    \re^{i\omega_k X_k}
    =
    \bEq
    \rprod_{k=1}^n
    \bEc
    \re^{i\omega_k X_k}
    =
    \bEq
    \rprod_{k=1}^n
    \varphi_k(X_k).
\end{equation}
This is a multivariable counterpart of (\ref{EPhi}), which employs the fact that  $\re^{i\omega_1 X_1}, \ldots,\re^{i\omega_n X_n}$ are statistically independent random elements (random unitary operators on the Hilbert space $\fH$). A combination of (\ref{EE}) and (\ref{EEleft}) leads to (\ref{Ephiprod}).
\end{proof}

In particular, suppose $\omega_1, \ldots, \omega_n$ are independent standard normal (zero-mean unit-variance Gaussian) random variables. Then their characteristic  functions $\varphi_1, \ldots, \varphi_n$ are given by (\ref{cfclassgauss}) with $\sigma = 1$, and the relation (\ref{Ephiprod}) takes the form
\begin{equation}
\label{Ephiprodgauss}
    \bEq
    Y
    =
    \frac{1}{(2\pi)^{n/2}}
    \int_{\mR^n}
  \Phi(u)
    \re^{-\frac{1}{2}u^{\rT} (I_n+i\Theta^\diam)u}
    \rd u
\end{equation}
(with $I_n$ the identity matrix of order $n$)
whose left-hand side is a quadratic-exponential product moment of the quantum variables $X_1, \ldots, X_n$ involving the operator
\begin{equation}
\label{Y}
    Y
    :=
    \rprod_{k=1}^n
    \re^{-\frac{1}{2} X_k^2},
\end{equation}
which is a contraction due to positive semi-definiteness of the operators $X_1^2, \ldots, X_n^2$.
The right-hand side of (\ref{Ephiprodgauss}) admits closed-form calculation, for example, if $X_1, \ldots, X_n$ are in a zero-mean Gaussian state \cite{KRP_2010} with a quantum covariance matrix
\begin{equation}
\label{covX}
    \bEq (XX^{\rT})
    =
    P+i\Theta \succcurlyeq 0
\end{equation}
(whose positive semi-definiteness reflects the generalized Heisenberg uncertainty principle \cite{H_2001}),
where $P$ is a real positive semi-definite symmetric matrix of order $n$. In this case, the QCF (\ref{Phi}) is given by
\begin{equation}
\label{Phigauss}
    \Phi(u)
    =
  \re^{-\frac{1}{2}\|u\|_P^2},
  \qquad
  u \in \mR^n,
\end{equation}
where $\|u\|_P:= \sqrt{u^{\rT} P u} = |\sqrt{P} u|$ is a weighted Euclidean semi-norm. Substitution of (\ref{Phigauss}) into (\ref{Ephiprodgauss}) reduces the quadratic-exponential product moment to a Gaussian integral
\begin{align}
\nonumber
    \bEq
    Y
    & =
    \frac{1}{(2\pi)^{n/2}}
    \int_{\mR^n}
    \re^{-\frac{1}{2}u^{\rT} (P+I_n+i\Theta^\diam)u}
    \rd u\\
\label{gint}
    & =
    \frac{1}{\sqrt{\det(P+I_n+i\Theta^\diam)}}.
\end{align}
In general, the quantity on the right-hand side of (\ref{gint}) is complex-valued, which is in line with the fact that the operator $Y$ in (\ref{Y})
is not self-adjoint. A real-valued version of the quadratic-exponential product moment above can be defined by
\begin{equation}
\label{QYY}
    \bEq
    (YY^{\dagger})
    =
    \bEq
    \left(
    \rprod_{j=1}^n
    \re^{-\frac{1}{2} X_j^2}
    \lprod_{k=1}^n
    \re^{-\frac{1}{2} X_k^2}
    \right),
\end{equation}
where, similarly to (\ref{xirprod}),
    $$\lprod_{k=1}^n
    \zeta_k
    :=
    \zeta_n\x \ldots  \x \zeta_1$$
is the ``leftward'' product of quantum variables $\zeta_1, \ldots, \zeta_n$. The moment (\ref{QYY}) is real-valued since the operator $YY^\dagger$ is self-adjoint. Its computation can be carried out by using an augmented vector of the quantum variables $X_1, \ldots, X_n$ duplicated in reverse order as
\begin{equation}
\label{cX}
    \cX:=
    \begin{bmatrix}
        X_1\\
        \vdots\\
        X_n\\
        X_n\\
        \vdots\\
        X_1
    \end{bmatrix}
    =
    \begin{bmatrix}
        X\\
        RX
    \end{bmatrix}
    =
    \begin{bmatrix}
        I_n\\
        R
    \end{bmatrix}
    X,
\end{equation}
where
\begin{equation}
\label{R}
  R:=
    \begin{bmatrix}
        0 & \cdots & 1\\
        \vdots &\cdot^{\cdot^\cdot} & \vdots \\
        1 & \cdots & 0
    \end{bmatrix}
\end{equation}
is a Hankel matrix of order $n$ whose entries are zeros except for the main anti-diagonal which consists of ones. The vector $\cX$ in (\ref{cX}) is also in a zero-mean Gaussian state, and its quantum covariance matrix is related to that of $X$ in (\ref{covX}) by
\begin{equation}
\label{covcX}
    K
    :=
    \bEq (\cX\cX^{\rT})
    =
    \begin{bmatrix}
        I_n\\
        R
    \end{bmatrix}
    (P+i\Theta)
    \begin{bmatrix}
        I_n & R
    \end{bmatrix},
\end{equation}
where use is made of the symmetry of $R$ in (\ref{R}). Therefore, application of Theorem~\ref{th:cfmulti} to the augmented vector $\cX$ in (\ref{cX}) yields an appropriate modification of (\ref{gint}):
\begin{align}
\nonumber
    \bEq (Y&Y^{\dagger})
    =
    \bEq
    \Big(
    \rprod_{j=1}^n
    \varphi(X_j)
    \lprod_{k=1}^n
    \varphi(X_k)
    \Big)\\
\nonumber
    & =
    \bEc
    \Big(
  \Phi(\omega)
    \exp
    \Big(
    -\frac{i}{2}\omega^{\rT}
    \Big(
        \begin{bmatrix}
        I_n\\
        R
    \end{bmatrix}
    \Theta
    \begin{bmatrix}
        I_n & R
    \end{bmatrix}
    \Big)
    ^{\diam} \omega
    \Big)
    \Big)\\
\nonumber
    & =
    \bEc
    \re^{
    -\frac{1}{2}\omega^{\rT}K
    ^{\diam} \omega
    }\\
\nonumber
    & =
    \frac{1}{(2\pi)^n}
    \int_{\mR^{2n}}
    \re^{-\frac{1}{2}u^{\rT} (I_{2n}+K^\diam)u}
    \rd u\\
\label{gintYY}
    & =
    \frac{1}{\sqrt{\det(I_{2n}+K^\diam)}},
\end{align}
where $\omega$ is a classical zero-mean Gaussian $\mR^{2n}$-valued random vector with the identity covariance matrix (so that $\omega$ consists of $2n$ independent standard normal random variables). The right-hand side of (\ref{gintYY}) is a positive quantity, which does not exceed $1$ since the operator $YY^{\dagger}$ is also a contraction (inheriting this property from $Y$ in (\ref{Y})). Moreover, it admits the upper  bound
\begin{align}
\nonumber
    \bEq (YY^{\dagger})
    & \<
    \frac{1}{(2\pi)^n}
    \int_{\mR^{2n}}
    \re^{-\frac{1}{2}u^{\rT} (I_{2n}+\Re K^\diam)u}
    \rd u\\
\nonumber
    & =
    \frac{1}{\sqrt{\det(I_{2n}+\Re K)}}\\
\label{EYYupper}
    & =
    \frac{1}{\sqrt{\det(I_n+2P)}}.
\end{align}
Here, we have also used the fact that the real part of the quantum covariance matrix $K$ in (\ref{covcX}) is the symmetric matrix
$$    \Re K
    =
    \begin{bmatrix}
        I_n\\
        R
    \end{bmatrix}
    P
    \begin{bmatrix}
        I_n & R
    \end{bmatrix}
$$
whose spectrum differs only by zeros from the spectrum of the matrix $2P$ due to the orthogonality of the symmetric matrix $R$ in (\ref{R}):
$$
    \begin{bmatrix}
        I_n & R
    \end{bmatrix}
    \begin{bmatrix}
        I_n\\
        R
    \end{bmatrix}
    P
    =
    (I_n + R^2) P = 2P.
$$
The real-valuedness of the determinant in (\ref{gintYY}) can also be established directly by computing the determinant  in a recursive fashion.

By using the transformation $\omega \mapsto -i\omega$, which replaces the characteristic functions with the MGFs, and an analytic continuation argument,  (\ref{Ephiprod}) of  Theorem~\ref{th:cfmulti} can be modified as
\begin{equation}
\label{Epsiprod}
    \bEq
    \rprod_{k=1}^n
    \psi_k(X_k)
    =
    \bEc
    \Big(
  \Psi(\omega)
    \re^{\frac{i}{2}\omega^{\rT} \Theta^{\diam} \omega}
    \Big),
\end{equation}
provided the expectations exist.
Here, $\psi_1, \ldots, \psi_n$ are the marginal MGFs of the independent classical random variables $\omega_1, \ldots, \omega_n$, and
\begin{equation}
\label{Psi}
  \Psi(u)
  :=
  \Phi(-iu)
  =
  \bEq \re^{u^\rT X},
  \qquad
  u \in \mR^n,
\end{equation}
is the MGF for the quantum variables $X_1, \ldots, X_n$.

\section{Averaging the quadratic-exponential functions of Gaussian quantum variables}
\label{sec:QEF}

We will now combine the parameter randomization technique of the previous section with complex symplectic factorizations of Appendix~\ref{sec:fact} in order to compute
the QEF
\begin{equation}
\label{Xi}
    \Xi := \bEq \re^{X^\rT \Pi X}
\end{equation}
for the vector $X$ of self-adjoint quantum variables $X_1, \ldots,\\ X_n$ in (\ref{X}), satisfying the Weyl CCRs (\ref{CCR}), where
$\Pi$ is a real positive definite symmetric matrix of order $n$. The following formulates relevant notation and conditions and, in essence, provides  the proof of Theorem~\ref{th:QEF} below.
Let the CCR matrix $\Theta$ be nonsingular. Then there exists a nonsingular matrix $T \in \mR^{n\x n}$ such that
\begin{equation}
\label{TJT}
    \Theta = T J T^\rT,
    \qquad
    J:= \frac{1}{2}(I_\nu \ox \bJ) ,
    \qquad
    \nu := \frac{n}{2},
\end{equation}
where $\ox$ is the Kronecker product of matrices, and
\begin{equation}
\label{bJ}
    \bJ
    :=
    \begin{bmatrix}
        0 & 1\\
        -1 & 0
    \end{bmatrix}
\end{equation}
spans the subspace of antisymmetric matrices of order $2$. Hence,
$T^{-1}X$
consists of self-adjoint quantum variables which satisfy
\begin{align*}
    [T^{-1}X,(T^{-1}X)^\rT]
    & = T^{-1}[X,X^\rT] T^{-\rT}\\
    & = 2i T^{-1}\Theta T^{-\rT} = 2iJ
\end{align*}
with the CCR matrix $J$, where $(\cdot)^{-\rT}:= ((\cdot)^{-1})^\rT$.
Since $T^\rT \Pi T\succ 0$, then, by Williamson's symplectic diagonalization  theorem \cite{W_1936,W_1937} (see also \cite[pp. 244--245]{D_2006}), there exists a symplectic matrix $V \in \mR^{n\x n}$ (satisfying $V JV^\rT= J$) such that
\begin{equation}
\label{TPT}
    V^\rT T^\rT \Pi T V = \Lambda \ox I_2,
    \qquad
    \Lambda
    := \diag_{1\< k\< \nu}(\lambda_k),
\end{equation}
where $\lambda_1, \ldots, \lambda_\nu$ are positive real numbers (the symplectic eigenvalues of the matrix $T^\rT \Pi T$).  Hence,
\begin{equation}
\label{Z}
  Z:= V^{-1}T^{-1} X
\end{equation}
inherits the CCR matrix $J$ from the vector $T^{-1}X$ and,
up to an isomorphism,   can be formed from $\nu$ pairs $(q_k, p_k)$ of conjugate position $q_k$ and momentum $p_k:= -i\d_{q_k}$ operators, $k=1,\ldots, \nu$, acting on the Schwartz space \cite{V_2002} on $\mR^{\nu}$:
\begin{equation}
\label{Zqp}
    Z
    :=
    \begin{bmatrix}
        q_1\\
        p_1\\
        \vdots\\
        q_{\nu}\\
        p_{\nu}
    \end{bmatrix}.
\end{equation}
Indeed, each of the position-momentum pairs has the CCR matrix $\frac{1}{2} \bJ$, and the quantum variables in different pairs commute with each other:
\begin{equation}
\label{qpcomm}
    \left[
    \begin{bmatrix}
      q_j\\
      p_j
    \end{bmatrix},
    \begin{bmatrix}
      q_k\\
      p_k
    \end{bmatrix}^\rT
    \right]
    =
    i \delta_{jk}\bJ,
    \qquad
    j,k=1,\ldots, \nu,
\end{equation}
where $\delta_{jk}$ is the Kronecker delta, and the matrix $\bJ$ is given by (\ref{bJ}). From (\ref{TPT})--(\ref{Zqp}), it follows that
\begin{align*}
    X^\rT \Pi X
    & = Z^\rT  V^\rT T^\rT \Pi TVZ\\
    & = Z^\rT  (\Lambda \ox I_2)Z\\
    & =
    \sum_{k=1}^\nu
    \lambda_k (q_k^2+p_k^2),
\end{align*}
which, in view of the interpair commutativity in (\ref{qpcomm}), implies that
\begin{equation}
\label{eXPX}
    \re^{X^\rT \Pi X }
    =
    \prod_{k=1}^\nu
    \re^{\lambda_k (q_k^2+p_k^2)}.
\end{equation}
Application of (\ref{eee}) and (\ref{sympfact3}) to each of the factors in (\ref{eXPX}) leads to
\begin{equation}
\label{eeek}
  \re^{\lambda_k (q_k^2+p_k^2)}
  =
    \re^{\alpha_k q_k^2}
    \re^{\beta_k p_k^2}
    \re^{\alpha_k q_k^2}
\end{equation}
where
\begin{equation}
\label{sympfactk}
    \alpha_k
    =
    \frac{1}{2}
    \tanh \lambda_k,
    \qquad
   \beta_k
     =
    \frac{1}{2}
    \sinh(2\lambda_k),
    \qquad
    k = 1,\ldots, \nu,
\end{equation}
and hence,
\begin{equation}
\label{eXPX1}
    \re^{X^\rT \Pi X }
    =
    \prod_{k=1}^\nu
    \big(
    \re^{\alpha_k q_k^2}
    \re^{\beta_k p_k^2}
    \re^{\alpha_k q_k^2}
    \big),
\end{equation}
whose right-hand side is the product of $\nu$ pairwise commuting positive definite   self-adjoint quantum variables (each of which is a product of three noncommuting quantum variables).
We will now rearrange the positions and momenta in triples to form a vector $\cZ:= (\cZ_k)_{1\< k \< 3\nu}$ as
\begin{equation}
\label{cZ}
    \cZ
    :=
    \begin{bmatrix}
        q_1\\
        p_1\\
        q_1\\
        \vdots\\
        q_\nu\\
        p_\nu\\
        q_\nu
    \end{bmatrix}
    =
    F Z,
    \qquad
    F:=
    I_\nu
    \ox
    \begin{bmatrix}
      1 & 0\\
      0 & 1\\
      1 & 0
    \end{bmatrix},
\end{equation}
where the structure of the binary matrix $F \in \{0,1\}^{3\nu\x n}$ follows from (\ref{Zqp}). By taking the quantum expectation on both sides of (\ref{eXPX1}) and applying (\ref{Epsiprod}) and (\ref{Psi}) to (\ref{cZ}), it follows that the QEF (\ref{Xi}) can be computed as
\begin{equation}
\label{Epsiprod1}
    \Xi
    =
    \bEq
    \rprod_{k=1}^{3\nu}
    \psi_k(\cZ_k)
    =
    \bEc
    \Big(
  \Psi(\omega)
    \re^{\frac{i}{2}\omega^{\rT} (FJF^\rT)^{\diam} \omega}
    \Big).
\end{equation}
Here, $\omega:= (\omega_k)_{1\< k \< 3\nu}$ is a classical random vector consisting of independent zero-mean Gaussian random variables with variances
\begin{align}
\label{vars1}
    \bEc (\omega_{3k-2}^2) & = \bEc (\omega_{3k}^2) = 2\alpha_k,\\
\label{vars2}
    \bEc (\omega_{3k-1}^2)
    & = 2\beta_k,
    \qquad
    k = 1, \ldots, \nu,
\end{align}
and the corresponding MGFs
\begin{align*}
    \psi_{3k-2}(z)
    & = \psi_{3k}(z) = \re^{\alpha_k z^2},\\
    \psi_{3k-1}(z)
    & = \re^{\beta_k z^2}.
\end{align*}
In (\ref{Epsiprod1}), use is also made of the operation $(\cdot)^\diam$ from (\ref{diam}),   together with  the MGF $\Psi$ and the CCR matrix $FJF^\rT$ of the vector $\cZ$ in (\ref{cZ}), with the matrix $J$ given by (\ref{TJT}). Therefore, if the quantum variables $X_1, \ldots, X_n$ in (\ref{X}) are in a zero-mean Gaussian state with the quantum covariance matrix (\ref{covX}), then $\cZ$ is also in a zero-mean Gaussian state with 
\begin{align}
\nonumber
    L
    & :=
    \bEq(\cZ\cZ^\rT)\\
\nonumber
     & =
    F\bEq(ZZ^\rT)F^\rT\\
\nonumber
    & =
    FV^{-1}T^{-1} (P+i\Theta)T^{-\rT}V^{-\rT}F^\rT\\
\label{covcZ}
    & =
    FV^{-1}T^{-1}P T^{-\rT}V^{-\rT}F^\rT
    +
    iFJF^\rT,
\end{align}
and hence, its MGF takes the form
\begin{equation}
\label{MGFcZ}
    \Psi(u)
    =
    \bEq \re^{u^\rT \cZ}
    =
    \re^{\frac{1}{2}\|T^{-\rT}V^{-\rT}F^\rT u\|_P^2},
    \qquad
    u \in \mR^{3\nu}.
\end{equation}
Substitution of (\ref{MGFcZ}) into (\ref{Epsiprod1}) leads to
\begin{align}
\nonumber
    \Xi
    & =
    \bEc
  \re^{\frac{1}{2}(\|T^{-\rT}V^{-\rT}F^\rT \omega\|_P^2 +
    i\omega^{\rT} (FJF^\rT)^{\diam} \omega)}\\
\nonumber
    & =
    \bEc
  \re^{\frac{1}{2}\omega^\rT L^\diam \omega}\\
\nonumber
    & =
    \frac{1}{(2\pi)^{3\nu/2}\sqrt{\det \mho}}
    \int_{\mR^{3\nu}}
    \re^{-\frac{1}{2}u^{\rT} (\mho^{-1}-L^\diam)u}
    \rd u\\
\nonumber
  & =
  \frac{1}{\sqrt{\det(I_{3\nu} - \mho L^\diam)}}\\
  \label{Epsiprod2}
  & \<
  \frac{1}{\sqrt{\det(I_{3\nu} - \mho \Re L)}}
\end{align}
(similarly to (\ref{gintYY}) and (\ref{EYYupper})), where
\begin{equation}
\label{mho}
    \mho
    :=
    \bEc(\omega \omega^\rT)
    =
    2
    \blockdiag_{1\< k\< \nu}
    (\alpha_k, \beta_k, \alpha_k)
\end{equation}
is the diagonal covariance matrix of the auxiliary classical zero-mean Gaussian random vector $\omega$ with the variances (\ref{vars1}), (\ref{vars2}). The expectations in (\ref{Epsiprod2}) exist if the quantum and classical  covariance matrices $L$ and $\mho$ in (\ref{covcZ}) and (\ref{mho}) satisfy
\begin{equation}
\label{mhoReL}
    \br(\mho \Re L) < 1,
\end{equation}
where $\br(\cdot)$ is the spectral radius of a square matrix. The above calculations differ from the approach of \cite[Section~3]{PS_2015} involving a related problem of computing the MGF for the number operators  in QHOs and are summarised in the following theorem along with an additional insight into (\ref{mhoReL}).

\begin{theorem}
\label{th:QEF}
Suppose the vector $X$ of self-adjoint quantum variables $X_1, \ldots, X_n$ in (\ref{X}), satisfying the Weyl CCRs (\ref{CCR}), has a nonsingular CCR matrix $\Theta$ and is in a zero-mean Gaussian state with the quantum covariance matrix (\ref{covX}). Also, let the matrix $\Pi$ in (\ref{Xi}) be positive definite. Furthermore, suppose (\ref{mhoReL}) is satisfied for the matrices $L$ and $\mho$ in (\ref{covcZ}) and (\ref{mho}), associated with the transformation of $\Theta$ to a canonical form and symplectic diagonalization of $T^\rT \Pi T$ as described by (\ref{TJT})--(\ref{cZ}). Then the QEF $\Xi$ can be computed as in (\ref{Epsiprod2}). Moreover, for the fulfillment of (\ref{mhoReL}), it is sufficient  that
\begin{equation}
\label{Psmall}
  \br(P\Pi)
  \max_{1\<k \< \nu}
    \frac{\sinh(2\lambda_k)}{\lambda_k}
    < 1.
\end{equation}
\end{theorem}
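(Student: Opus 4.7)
The plan decomposes into two parts matching the two assertions of the theorem. The first --- that the QEF equals the expression in (\ref{Epsiprod2}) --- is essentially already carried out in the derivation preceding the statement: the symplectic diagonalization (\ref{TJT})--(\ref{TPT}) reduces $X^\rT \Pi X$ to the decoupled form (\ref{eXPX}); the three-factor complex symplectic factorization (\ref{eeek})--(\ref{sympfactk}) splits each pairwise commuting block exponential into a product of univariate quadratic exponentials; and the randomization identity (\ref{Epsiprod}) from Theorem~\ref{th:cfmulti}, applied to the augmented vector $\cZ$ in (\ref{cZ}) with MGF (\ref{MGFcZ}), yields the Gaussian integral. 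I would only need to flag that (\ref{mhoReL}) is precisely what guarantees absolute convergence of this integral: since $\Re L$ is symmetric, $(\Re L)^\diam = \Re L$, so the real part of the quadratic form in the exponent is $\tfrac{1}{2} u^\rT(\mho^{-1}-\Re L)u$, which is positive definite iff $\br(\mho\Re L)<1$.

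For the ``moreover'' clause, my plan is to bound $\br(\mho \Re L)$ from above by the quantity appearing in (\ref{Psmall}). First I would use the cyclic property of the spectral radius to rewrite $\br(\mho \Re L) = \br(MF^\rT \mho F)$, where $M := V^{-1}T^{-1}PT^{-\rT}V^{-\rT}\succcurlyeq 0$. A direct computation exploiting the Kronecker structure of $F$ in (\ref{cZ}) and the block diagonal form of $\mho$ in (\ref{mho}) then yields
\[
    F^\rT \mho F
    =
    \blockdiag_{1\< k\< \nu}\diag(2\tanh\lambda_k,\, \sinh(2\lambda_k))
    =
    (\Lambda\ox I_2)\, D,
\]
where $D := \blockdiag_{1\< k\< \nu}\diag\bigl(\tfrac{2\tanh\lambda_k}{\lambda_k},\, \tfrac{\sinh(2\lambda_k)}{\lambda_k}\bigr)$ is a positive diagonal matrix that commutes with $\Lambda\ox I_2$.

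The crux of the argument is to decouple $D$ from the remaining factors. Observing from (\ref{TPT}) that $M(\Lambda\ox I_2) = V^{-1}T^{-1}P\Pi\, TV$ is similar to $P\Pi$, so $\br(M(\Lambda\ox I_2)) = \br(P\Pi)$, I would use $[D,\Lambda\ox I_2]=0$ to symmetrize $MF^\rT\mho F$ by conjugation with $(\Lambda\ox I_2)^{1/2}$, producing a matrix of the form $D^{1/2}AD^{1/2}$ with $A:= (\Lambda\ox I_2)^{1/2}M(\Lambda\ox I_2)^{1/2} \succcurlyeq 0$ and $\br(A)=\br(P\Pi)$. The PSD sandwich bound $D^{1/2}AD^{1/2}\preccurlyeq \lambda_{\max}(D)A$ then delivers $\br(\mho \Re L) \< \lambda_{\max}(D)\,\br(P\Pi)$. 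A final elementary step --- the inequality $\cosh^2\lambda_k \> 1$ rearranged as $\tfrac{2\tanh\lambda_k}{\lambda_k} \< \tfrac{\sinh(2\lambda_k)}{\lambda_k}$ --- identifies $\lambda_{\max}(D) = \max_{1\< k\< \nu}\tfrac{\sinh(2\lambda_k)}{\lambda_k}$, giving (\ref{Psmall}). The main obstacle is precisely this decoupling: a naive bound through $\|M\|\cdot\|F^\rT\mho F\|$ would discard the similarity $M(\Lambda\ox I_2)\sim P\Pi$, losing the sharp factor $\br(P\Pi)$; it is the commutativity of $D$ with $\Lambda\ox I_2$ that lets one absorb $D$ into a scalar multiplier while preserving that similarity.
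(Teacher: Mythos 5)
Your strategy coincides with the paper's. The first assertion is, as you say, just the derivation (\ref{TJT})--(\ref{Epsiprod2}), with (\ref{mhoReL}) being exactly the convergence condition for the Gaussian integral (your observation that $\Re(L^\diam)=(\Re L)^\diam=\Re L$ is the right justification). For the ``moreover'' clause, the paper symmetrizes to $\sqrt{\mho}\,\Re L\,\sqrt{\mho}\prec I_{3\nu}$, inserts $P=\Pi^{-1/2}(\sqrt{\Pi}P\sqrt{\Pi})\Pi^{-1/2}$ with $\sqrt{\Pi}P\sqrt{\Pi}\preccurlyeq\br(P\Pi)I_n$, and computes the $3\x 3$ blocks of $\sqrt{\mho}FV^{-1}T^{-1}\Pi^{-1}T^{-\rT}V^{-\rT}F^{\rT}\sqrt{\mho}$, whose spectra $\{0,\,4\alpha_k/\lambda_k,\,2\beta_k/\lambda_k\}$ yield the same maximum $\sinh(2\lambda_k)/\lambda_k$. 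Your cyclic rearrangement to the diagonal matrix $F^{\rT}\mho F=\blockdiag_k\diag(2\tanh\lambda_k,\sinh(2\lambda_k))$ and the similarity $M(\Lambda\ox I_2)\sim P\Pi$ are the transpose-cycled version of the same computation, so the two routes are essentially identical.

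One intermediate step needs repair: the claimed Loewner inequality $D^{1/2}AD^{1/2}\preccurlyeq\lambda_{\max}(D)A$ is false in general (take $A=\bigl[\begin{smallmatrix}1&1\\1&1\end{smallmatrix}\bigr]$, $D=\diag(1,\epsilon)$ with small $\epsilon>0$; then $\lambda_{\max}(D)A-D^{1/2}AD^{1/2}$ has negative determinant). What you actually use is only the spectral-radius consequence $\br(D^{1/2}AD^{1/2})\<\lambda_{\max}(D)\,\br(A)$, which does hold for $A\succcurlyeq 0$ and diagonal $D\succ 0$ via the correct sandwich $D^{1/2}AD^{1/2}\preccurlyeq\lambda_{\max}(A)\,D\preccurlyeq\lambda_{\max}(A)\lambda_{\max}(D)I$ (equivalently, submultiplicativity of the spectral norm on symmetric positive semi-definite matrices). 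With that one-line correction, the remaining steps --- the commutation of $D$ with $\Lambda\ox I_2$, $\br(A)=\br(P\Pi)$, and $2\tanh\lambda\<\sinh(2\lambda)$ identifying $\lambda_{\max}(D)$ --- go through and reproduce (\ref{Psmall}).
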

\begin{proof}
With (\ref{Epsiprod2}) established above, it remains to discuss the condition (\ref{mhoReL}).
Since $\mho$ and $L$ are covariance matrices, (\ref{mhoReL}) is equivalent to the matrix $\mho \Re L$ being a contraction, that is, $\sqrt{\mho} \Re L \sqrt{\mho}\prec I_{3\nu}$. The latter inequality can be represented as
\begin{align}
\nonumber
    \sqrt{\mho}  &
    FV^{-1}T^{-1}P T^{-\rT}V^{-\rT}F^\rT
    \sqrt{\mho}\\
\nonumber
    =&
        \sqrt{\mho}
    FV^{-1}T^{-1}\Pi^{-1/2}\sqrt{\Pi}P\sqrt{\Pi}\Pi^{-1/2}\\
\label{prec}
     & \x T^{-\rT}V^{-\rT}F^\rT
    \sqrt{\mho}
    \prec I_{3\nu}
\end{align}
in view of (\ref{covcZ}).
Now, it follows from (\ref{TPT}) that
$$
    V^{-1}T^{-1} \Pi^{-1} T^{-\rT} V^{-\rT}
    =
    \Lambda^{-1} \ox I_2
    =
    \diag_{1\< k\< \nu}(\lambda_k^{-1}) \ox I_2
$$
and hence,
\begin{align}
\nonumber
    \sqrt{\mho}
    F&V^{-1}T^{-1} \Pi^{-1}T^{-\rT}V^{-\rT}F^\rT
    \sqrt{\mho}\\
\nonumber
    =&
    2
    \blockdiag_{1\< k\< \nu}
    \Big(
    \frac{1}{\lambda_k}
    \diag(\sqrt{\alpha_k}, \sqrt{\beta_k}, \sqrt{\alpha_k})\\
\nonumber
    & \x {\small\begin{bmatrix}
      1 & 0\\
      0 & 1\\
      1 & 0
    \end{bmatrix}
    \begin{bmatrix}
      1 & 0 & 1\\
      0 & 1 & 0
    \end{bmatrix}}
    \diag(\sqrt{\alpha_k}, \sqrt{\beta_k}, \sqrt{\alpha_k})
    \Big)\\
\nonumber
    =&
    2
    \blockdiag_{1\< k\< \nu}
    \left(
    \frac{1}{\lambda_k}
    \begin{bmatrix}
      \alpha_k & 0 & \alpha_k\\
      0 & \beta_k & 0\\
      \alpha_k & 0 & \alpha_k
    \end{bmatrix}
    \right)\\
\label{TPT1}
    \preccurlyeq &
    2\max_{1\<k \< \nu}
    \frac{\beta_k}{\lambda_k}
    I_{3\nu}
\end{align}
due to the structure of the matrices $F$ and $\mho$ in (\ref{cZ}) and (\ref{mho}). Here, use is made of the properties that the spectrum of the matrix
$
    {\small\begin{bmatrix}
      \alpha & 0 & \alpha\\
      0 & \beta & 0\\
      \alpha & 0 & \alpha
    \end{bmatrix}    }
$
is $\{0,2\alpha, \beta\}$ and that
$    2\alpha_k = \tanh\lambda_k< \lambda_k < \beta_k
$
in view of (\ref{sympfactk}) and $\lambda_k>0$ for all $k=1, \ldots, \nu$. Therefore, since $\sqrt{\Pi} P \sqrt{\Pi} \preccurlyeq \br(P\Pi) I_n$, then (\ref{TPT1}) implies that (\ref{Psmall}) is sufficient for (\ref{prec}), that is, (\ref{mhoReL}).
\end{proof}

The proof of Theorem~\ref{th:QEF} shows that  the condition (\ref{Psmall}), which is only sufficient in general, is also necessary for (\ref{mhoReL}) in the case when $P = \lambda \Pi^{-1}$ for some $\lambda \>0$. Also, being concerned with the quantum setting, (\ref{Psmall})   is more conservative than its counterpart $P\prec (2\Pi)^{-1}$ which secures the finiteness of the QEF (\ref{Xi}) in the classical Gaussian case. This conservativeness is particularly pronounced when at least one of the symplectic eigenvalues $\lambda_1, \ldots, \lambda_\nu$ of the matrix $T^\rT \Pi T$ in (\ref{TPT})  is large. In the classical limit, as the CCR matrix $\Theta$ goes to $0$, so does the matrix $T$ in (\ref{TJT}), and the symplectic spectrum of $T^\rT \Pi T$ collapses to $0$, in which case the maximum in (\ref{Psmall})  tends to $2$, thus recovering asymptotically the classical setting:
\begin{equation}
\label{Xiclass}
  \lim_{\Theta\to 0}
  \Xi =
  \frac{1}{\sqrt{\det(I_n - 2P\Pi})}.
\end{equation}
In the context of  quantum risk-sensitive  control \cite{J_2004,J_2005,VPJ_2018a,VPJ_2018b}, the weighting matrix $\Pi$ is usually scaled as $\Pi\mapsto \theta \Pi$ by a positive parameter  $\theta>0$, and so also is the symplectic spectrum of $\Pi$.   In this case, the QEF (\ref{Xi}) acquires a dependence on $\theta$, which is different from that of the corresponding classical cost $\frac{1}{\sqrt{\det(I_n - 2\theta P\Pi})}$ in (\ref{Xiclass}).

\section{Recurrence relations for 
quadratic-exponential functions of quantum processes}
\label{sec:two}

Complementing the previous discussion, consider a dis\-crete-time quantum process $X:=(X_0,X_1, X_2, \ldots)$, where $X_k$ is a vector consisting of an even number $n$ of self-adjoint quantum variables on an underlying Hilbert  space $\fH$ satisfying  the CCRs
\begin{equation}
\label{cXCCR}
    [\cX_N,  \cX_N^\rT] = 2i\Theta_N,
    \qquad
    \cX_N
    :=
    \begin{bmatrix}
        X_0\\
        \vdots\\
        X_N
    \end{bmatrix}
    =
    \begin{bmatrix}
        \cX_{N-1}\\
        X_N
    \end{bmatrix},
\end{equation}
where the CCR matrix $\Theta_N$ of order $(N+1)n$ admits the recurrence representation
\begin{equation}
\label{Thetanext}
    \Theta_N
    =
    \begin{bmatrix}
        \Theta_{N-1} & -\sigma_N^\rT\\
        \sigma_N & \theta_N
    \end{bmatrix},
    \qquad
    N=1,2,3,\ldots,
\end{equation}
in terms of auxiliary matrices $\sigma_N \in \mR^{n\x Nn}$ and $\theta_N$ (with the latter being a real antisymmetric matrix of order $n$)  such that $$
    [X_N,\cX_{N-1}^\rT] = 2i\sigma_N,
    \qquad
    [X_N,  X_N^\rT] = 2i\theta_N.
$$
In particular, the sequence $X$ can result as $    X_k = x(k \eps)
$
from considering a continuous-time quantum process $x$ at discrete instants with time step
$\eps>0$.  If $x$ is associated with the Heisenberg evolution of the system variables of an OQHO (which preserves the CCRs in time), then the matrix $\bTheta_N$ is block Toeplitz. However, the subsequent discussion does not employ this particular structure (and rather interprets $X$ as a general collection of indexed quantum variables).
We will be concerned with two classes of nonanticipating quadratic-exponential fun\-ctions of the quantum process $X$.
For a given but otherwise arbitrary time horizon $N = 0,1,2,\ldots$, consider a self-adjoint quantum variable
\begin{align}
\nonumber
    Q_N
    & :=
    R_N
    \re^{X_0^\rT C_0 X_0}
    R_N^\dagger\\
\label{QN}
    & =
    \re^{\cX_N^\rT C_N \cX_N}
    Q_{N-1}
    \re^{\cX_N^\rT C_N \cX_N},
\end{align}
where
\begin{equation}
\label{RN}
    R_N
    :=
    \lprod_{k=1}^N
    \re^{\cX_k^\rT C_k \cX_k}
    =
    \re^{\cX_N^\rT C_N \cX_N}
    R_{N-1}
\end{equation}
is an auxiliary (not necessarily self-adjoint) quantum variable, with
$C_0, C_1, C_2, \ldots$ being appropriately dimensioned real symmetric matrices. The initial conditions for (\ref{QN}) and (\ref{RN}) are
\begin{equation}
\label{Q0R0}
    Q_0 = \re^{X_0^\rT C_0 X_0},
    \qquad
    R_0 = \cI.
\end{equation}
The definitions (\ref{QN})--(\ref{Q0R0}) describe a discrete-time analogue of the quantum risk-sensitive cost functionals \cite{J_2004,J_2005} (see also \cite{DDJW_2006,YB_2009}) in measurement-based quantum control and filtering problems (employing continuous ti\-me ordered exponentials).
Another  class  of nonanticipating quadratic-exponential functions of the quantum process $X$ is  provided by
\begin{equation}
\label{EN}
    E_N
    :=
    \re^{\cX_N^\rT \Pi_N \cX_N},
\end{equation}
with $\Pi_N$ a real symmetric matrix of order $(N+1)n$, and correspond to recent developments \cite{VPJ_2018a,VPJ_2018b} on quantum extensions  of the classical risk-sensitive control problems \cite{BV_1985,J_1973,W_1981}. Although $E_N$ does not have a multiplicative structure in contrast to $Q_N$ and $R_N$, there is a correspondence between these two  classes of quantum variables $Q_N$ and $E_N$. To this end, we will need an entire function
\begin{align}
\nonumber
    \Ups(z)
    & :=
    \int_0^1
    \re^{zv}
    \rd v
    =
    \left\{
    \begin{matrix}
      1 & {\rm for} & z = 0\\
      \frac{\re^z-1}{z} & {\rm for} & z\ne 0
    \end{matrix}
    \right.\\
\label{Ups}
    & =
    \sum_{k=0}^{+\infty}
    \frac{z^k}{(k+1)!}
\end{align}
of a complex variable. Its role is clarified by the identities
\begin{align}
\label{exp1}
    \exp
    \Big(
    \begin{bmatrix}
        \alpha & 0\\
        \beta & 0
    \end{bmatrix}
    \Big)
    & =
    \begin{bmatrix}
        \re^\alpha & 0\\
        \beta \Ups(\alpha) & I
    \end{bmatrix},\\
\label{exp2}
    \exp
    \Big(
    \begin{bmatrix}
        0 & \beta\\
        0 & \alpha
    \end{bmatrix}
    \Big)
    & =
    \begin{bmatrix}
        I & \beta \Ups(\alpha)\\
        0 & \re^\alpha
    \end{bmatrix},
\end{align}
which hold for appropriately dimensioned matrices $\alpha$ and $\beta$ and are particular cases of the representation \cite{H_2008} for the matrix exponentials of block triangular matrices, with the function $\Ups$ from (\ref{Ups}) being evaluated in (\ref{exp1}) and (\ref{exp2}) at a square matrix $\alpha$.

\begin{theorem}
\label{th:PiC}
Suppose $\Pi_N$ and $C_N$ are real symmetric matrices of order $(N+1)n$ which satisfy the recurrence equation
\begin{align}
\nonumber
    \re^{4i\Theta_N \Pi_N}
    =&
    \re^{4i\Theta_N C_N}\\
\nonumber
    & \x\begin{bmatrix}
        \exp(4i\Theta_{N-1} \Pi_{N-1}) & & 0\\
        4i\sigma_N \Pi_{N-1} \Ups(4i\Theta_{N-1} \Pi_{N-1}) & & I_n
    \end{bmatrix}\\
\label{Pinext}
    & \x\re^{4i\Theta_N C_N},
    \qquad
    N =1,2,3,\ldots,
\end{align}
with the initial condition $\Pi_0 = C_0$. Then the quantum variables $Q_N$ in (\ref{QN}) and $E_N$ in (\ref{EN}) coincide:  $Q_N = E_N$. 
\end{theorem}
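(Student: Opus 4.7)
The plan is to proceed by induction on $N$. The base case $N=0$ is immediate since $\Pi_0 = C_0$ gives $E_0 = \re^{X_0^\rT C_0 X_0} = Q_0$. For the inductive step, assume $Q_{N-1} = E_{N-1} = \re^{\cX_{N-1}^\rT \Pi_{N-1} \cX_{N-1}}$. Padding $\Pi_{N-1}$ by zero block rows and columns, define
$$
\widetilde{\Pi}_{N-1}
:=
\begin{bmatrix}
\Pi_{N-1} & 0 \\
0 & 0
\end{bmatrix}
\in \mR^{(N+1)n \x (N+1)n},
$$
so that $\cX_{N-1}^\rT \Pi_{N-1} \cX_{N-1} = \cX_N^\rT \widetilde{\Pi}_{N-1} \cX_N$ by (\ref{cXCCR}), and hence by (\ref{QN}),
$$
Q_N
\;=\;
\re^{\cX_N^\rT C_N \cX_N}
\re^{\cX_N^\rT \widetilde\Pi_{N-1} \cX_N}
\re^{\cX_N^\rT C_N \cX_N}.
$$

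The next step is to apply the Lie-algebraic correspondence developed in Appendices~\ref{sec:quadcom} and~\ref{sec:prod}: the map $A \mapsto \cX_N^\rT A \cX_N$ on real symmetric matrices $A$ of order $(N+1)n$ is a Lie-algebra homomorphism onto the algebra of quadratic forms of the quantum variables, with the commutator structure encoded by $A \leftrightarrow 4i\Theta_N A$ (consistent with $[\cX_N,\cX_N^\rT] = 2i\Theta_N$ from (\ref{cXCCR})). The product formula for quadratic-exponential functions obtained from Dynkin's representation therefore asserts
$$
\re^{\cX_N^\rT A \cX_N}\,\re^{\cX_N^\rT B \cX_N}
\;=\;
\re^{\cX_N^\rT D \cX_N},
\qquad
\re^{4i\Theta_N D}
=
\re^{4i\Theta_N A}\,\re^{4i\Theta_N B}.
$$
Iterating this identity for the three-fold product above yields $Q_N = \re^{\cX_N^\rT \Pi_N \cX_N} = E_N$ precisely when
$$
\re^{4i\Theta_N \Pi_N}
\;=\;
\re^{4i\Theta_N C_N}\,
\re^{4i\Theta_N \widetilde\Pi_{N-1}}\,
\re^{4i\Theta_N C_N}.
$$

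It remains to evaluate the middle factor $\re^{4i\Theta_N \widetilde\Pi_{N-1}}$ in closed form. Using the block decomposition of $\Theta_N$ from (\ref{Thetanext}) together with the zero padding,
$$
4i\Theta_N \widetilde{\Pi}_{N-1}
\;=\;
\begin{bmatrix}
4i\Theta_{N-1}\Pi_{N-1} & 0 \\
4i\sigma_N \Pi_{N-1} & 0
\end{bmatrix},
$$
which is of the block lower-triangular form appearing in (\ref{exp1}) with $\alpha = 4i\Theta_{N-1}\Pi_{N-1}$ and $\beta = 4i\sigma_N \Pi_{N-1}$. Substituting into (\ref{exp1}) recovers exactly the middle matrix in (\ref{Pinext}), completing the induction. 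The principal obstacle lies in securing the Lie-algebraic product rule with the correct normalization $A \leftrightarrow 4i\Theta_N A$ and in verifying that the matrix equation on $\re^{4i\Theta_N \Pi_N}$ determines $\Pi_N$ consistently (a branch-of-logarithm issue), both of which are handled by the commutator identity and Dynkin-type product formula established in Appendices~\ref{sec:quadcom} and~\ref{sec:prod}.
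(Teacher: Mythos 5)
Your proposal is correct and follows essentially the same route as the paper's proof: reduce the operator identity to the matrix identity $\re^{4i\Theta_N \Pi_N} = \re^{4i\Theta_N C_N}\exp\big(4i\Theta_N\big[\begin{smallmatrix}\Pi_{N-1} & 0\\ 0 & 0\end{smallmatrix}\big]\big)\re^{4i\Theta_N C_N}$ via the Dynkin/symplectic product formulas of Appendix~\ref{sec:prod} and the subvector padding $\cX_k^\rT C_k\cX_k = \cX_N^\rT\big[\begin{smallmatrix}C_k & 0\\ 0 & 0\end{smallmatrix}\big]\cX_N$, then evaluate the middle factor with the block partition (\ref{Thetanext}) and the identity (\ref{exp1}). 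The only cosmetic difference is that you phrase the argument as a direct induction on $Q_{N-1}=E_{N-1}$, whereas the paper unrolls the full telescoping product (\ref{link}); these are equivalent.
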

\begin{proof}
Since $\cX_0 = X_0$ in view of (\ref{cXCCR}),  the fulfillment of $\Pi_0 = C_0$ implies $Q_0 = E_0$ due to (\ref{Q0R0}) and (\ref{EN}) for $N=0$.
From Lemmas~\ref{lem:Dynk} and \ref{lem:tri} of Appendix~\ref{sec:prod}, it follows that if
\begin{align}
\nonumber
    \re^{4i\Theta_N \Pi_N}
    = &
    \lprod_{k=1}^N
    \exp
    \Big(4i\Theta_N
    \begin{bmatrix}
        C_k & 0\\
        0 & 0
    \end{bmatrix}
    \Big)\\
\nonumber
    & \x \exp
    \Big(
        4i\Theta_N
    \begin{bmatrix}
        C_0 & 0\\
        0 & 0
    \end{bmatrix}
    \Big)\\
\nonumber
    & \x\rprod_{k=1}^N
    \exp
    \Big(4i\Theta_N
    \begin{bmatrix}
        C_k & 0\\
        0 & 0
    \end{bmatrix}
    \Big)\\
\label{link}
    = &
    \re^{4i\Theta_N C_N}
    \exp
    \Big(4i\Theta_N
    \begin{bmatrix}
        \Pi_{N-1} & 0\\
        0 & 0
    \end{bmatrix}
    \Big)
    \re^{4i\Theta_N C_N},
\end{align}
then $Q_N=E_N$.
Here, use is made of the relations
$$
    \cX_k^\rT C_k \cX_k
    =
    \cX_N^\rT
    \begin{bmatrix}
        C_k & 0\\
        0 & 0
    \end{bmatrix}
    \cX_N,
    \qquad
    k=0,\ldots, N-1,
$$
which employ the fact that $\cX_k$ is a subvector of
$$
    \cX_N
    =
    \begin{bmatrix}
        \cX_k\\
        X_{k+1}\\
        \vdots\\
        X_N
    \end{bmatrix}
$$
in (\ref{cXCCR}) for any such $k$. The block partitioning (\ref{Thetanext}) implies that
$$
    \Theta_N
    \begin{bmatrix}
        \Pi_{N-1} & 0\\
        0 & 0
    \end{bmatrix}
    =
    \begin{bmatrix}
        \Theta_{N-1} \Pi_{N-1} & 0\\
        \sigma_N \Pi_{N-1} & 0
    \end{bmatrix},
$$
and hence, application of the identity (\ref{exp1}) yields
\begin{align}
\nonumber
    \exp&
    \Big(4i\Theta_N
    \begin{bmatrix}
        \Pi_{N-1} & 0\\
        0 & 0
    \end{bmatrix}
    \Big)\\
\label{expsparse1}
    & =
    \begin{bmatrix}
        \exp(4i\Theta_{N-1} \Pi_{N-1}) & & 0\\
        4i\sigma_N \Pi_{N-1} \Ups(4i\Theta_{N-1} \Pi_{N-1}) & & I_n
    \end{bmatrix}.
\end{align}
The relation (\ref{Pinext}) can now be obtained by substituting (\ref{expsparse1}) into (\ref{link}).
\end{proof}

We will now apply Theorem~\ref{th:PiC} to the case when the quadratic forms $\cX_k^\rT C_k \cX_k$ in (\ref{QN}) and (\ref{RN}) depend on the current quantum variables:
\begin{equation}
\label{XCDX}
    \cX_k^\rT C_k \cX_k = X_k^\rT D_k X_k,
\end{equation}
where $D_k$ is a real symmetric matrix of order $n$, so that
\begin{equation}
\label{CD}
    C_k
    =
    \begin{bmatrix}
        0 & 0\\
        0 & D_k
    \end{bmatrix}
\end{equation}
for all $k=1,2,3,\ldots$, with $C_0 = D_0$. The relation (\ref{XCDX}) is secured by  (\ref{CD}) since, in view of (\ref{cXCCR}), the vector $X_k$ is an appropriate subvector of $\cX_k$. A combination of (\ref{Thetanext}) with (\ref{CD}) leads to
$$
    \Theta_N
    C_N
    =
    \begin{bmatrix}
        0 & -\sigma_N^\rT D_N\\
        0 & \theta_N D_N
    \end{bmatrix},
$$
and hence,
\begin{equation}
\label{expsparse2}
    \re^{4i\Theta_N C_N}
    =
    \begin{bmatrix}
        I_n & -4i\sigma_N^\rT D_N \Ups(4i\theta_N D_N)\\
        0 & \exp(4i\theta_N D_N)
    \end{bmatrix}
\end{equation}
in view of the identity (\ref{exp2}). Substitution of (\ref{expsparse2}) into (\ref{Pinext}) of Theorem~\ref{th:PiC} yields
\begin{align}
\nonumber
    \re^{4i\Theta_N \Pi_N}
    = &
    \begin{bmatrix}
        I_n & -4i\sigma_N^\rT D_N \Ups(4i\theta_N D_N)\\
        0 & \exp(4i\theta_N D_N)
    \end{bmatrix}\\
\nonumber
    & \x
    \begin{bmatrix}
        \exp(4i\Theta_{N-1} \Pi_{N-1}) & & 0\\
        4i\sigma_N \Pi_{N-1} \Ups(4i\Theta_{N-1} \Pi_{N-1}) & & I_n
    \end{bmatrix}\\
\label{Pinext1}
    & \x
    \begin{bmatrix}
        I_n & -4i\sigma_N^\rT D_N \Ups(4i\theta_N D_N)\\
        0 & \exp(4i\theta_N D_N)
    \end{bmatrix}.
\end{align}
Together with the averaging results of Section~\ref{sec:QEF},  the recurrence relations of Theorem~\ref{th:PiC} and (\ref{Pinext1}) between the two classes of quadratic-exponential functions of quantum variables admit continuous-time versions for quantum processes governed by linear QSDEs. Such extensions  will be developed elsewhere (some of their aspects were tackled in \cite{VPJ_2018a,VPJ_2018b}).

\section{Conclusion}\label{sec:conc}

We have applied parameter randomization and complex symplectic factorization techniques to computing quad\-ra\-tic-exponential moments of Gaussian quantum variables. One of the ingredients of this approach is double averaging over the quantum state and auxiliary independent classical random variables. Another property, which underlies the method, is an isomorphism between the Lie algebras of complex Hamiltonian matrices  and quadratic forms of the quantum variables with complex symmetric matrices. Due to Dynkin's representation for the products of exponentials, this has led to complex symplectic factorizations, which reduce the quad\-ra\-tic-exponential moments to averaging the QCFs or MGFs   of quantum variables over an auxiliary classical Gaussian distribution. We have discussed their closed-form calculation in terms of quadratic-exponential moments of classical Gaussian random variables, and outlined a recursive implementation of this approach.
The computation of quadratic-exponential moments for a finite number of quantum variables, discussed in the present paper, is amenable to continuous-time extensions
concerning filtering and control problems for linear quantum stochastic systems with risk-sensitive performance criteria.

\appendix
\section{Commutator for quadratic forms of quantum variables satisfying CCRs}
\label{sec:quadcom}
\renewcommand{\theequation}{A\arabic{equation}}
\setcounter{equation}{0}

The following lemma is similar to the commutation relations for bilinear functions of annihilation and creation operators (see, for example, \cite[Appendix B]{JK_1998} and \cite[Lemma 4.2]{PUJ_2012}), whose exponentials  are considered in Schwinger's theorems \cite{CG_2010}.

\begin{lemma}
\label{lem:xieta}
For the vector $X$ of self-adjoint quantum variables $X_1,\\ \ldots, X_n$ in (\ref{X}) satisfying the CCRs (\ref{Theta}), the commutator of their quad\-ra\-tic forms
\begin{equation}
\label{xieta}
    \xi:= X^\rT A X,
    \qquad
    \eta:= X^\rT B X,
\end{equation}
specified by complex symmetric matrices $A$ and $B$ of order $n$, is computed as
\begin{equation}
\label{XCX}
    [\xi, \eta]
    =
    X^{\rT} C X.
\end{equation}
Here,
\begin{equation}
\label{CCC}
    C
    :=
    4i(A\Theta B - B \Theta A)
\end{equation}
is also a complex symmetric matrix of order $n$.
\end{lemma}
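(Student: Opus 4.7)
The plan is to expand the quadratic forms in components and use the bilinearity of the commutator together with the derivation (Leibniz) property
\[
    [\alpha\beta,\gamma\delta]
    = \alpha\gamma[\beta,\delta]+\alpha[\beta,\gamma]\delta+\gamma[\alpha,\delta]\beta+[\alpha,\gamma]\delta\beta.
\]
Writing $\xi=\sum_{j,k}A_{jk}X_jX_k$ and $\eta=\sum_{l,m}B_{lm}X_lX_m$ and applying the Leibniz formula to $[X_jX_k,X_lX_m]$ yields a sum of four terms of the form $\theta_{\cdot\cdot}$ times a quadratic monomial in the $X$'s. The crucial simplification, by virtue of the CCRs $[X_a,X_b]=2i\theta_{ab}$, is that each of these commutators is a scalar (a multiple of $\cI$), so it commutes through and does not generate higher-order operator factors. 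Consequently $[\xi,\eta]$ is already a quadratic form in the $X_j$'s, and it only remains to identify the coefficient matrix.

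First I would deal with the two terms in which the scalar commutator sits between quantum factors, namely $2i\theta_{km}X_jX_l$ and $2i\theta_{kl}X_jX_m$. Summing against $A_{jk}B_{lm}$ and invoking the symmetry $B_{lm}=B_{ml}$ produces, in each case, exactly $2i\,(A\Theta B)_{jm}X_jX_m$; the two contributions add to $4iX^\rT(A\Theta B)X$. Next I would treat the remaining two terms $2i\theta_{jm}X_lX_k$ and $2i\theta_{jl}X_mX_k$: after using $A_{jk}=A_{kj}$ and $B_{lm}=B_{ml}$ and relabelling the summation indices, each yields $2i\,X_jX_m(A\Theta B)_{mj}$. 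Here the index transposition matters, and the antisymmetry of $\Theta$ enters through
\[
    (A\Theta B)^\rT = B^\rT\Theta^\rT A^\rT = -B\Theta A,
\]
so $(A\Theta B)_{mj}=-(B\Theta A)_{jm}$ and the two inner terms combine into $-4iX^\rT(B\Theta A)X$.

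Adding the two blocks gives $[\xi,\eta] = X^\rT C X$ with $C := 4i(A\Theta B - B\Theta A)$, which is the claimed expression \eqref{CCC}. The complex symmetry of $C$ is then an immediate consequence of the identity $(A\Theta B)^\rT = -B\Theta A$ already used above, since it forces $C^\rT = 4i(-B\Theta A + A\Theta B) = C$.

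The only genuinely delicate step is the index bookkeeping in the last two terms, where one must be careful that the $X$-factors appear in the order $X_lX_k$ and $X_mX_k$ (not $X_jX_m$), and that one is therefore reading off the \emph{transposed} matrix element of $A\Theta B$ rather than $A\Theta B$ itself; it is at this point that the symmetry of $A,B$ and the antisymmetry of $\Theta$ are needed to convert the result into the form $-(B\Theta A)_{jm}$. Once this reindexing is done, no scalar remainder survives (the stray $2i\theta$ contributions one might expect from rearranging $X_lX_k$ into $X_kX_l$ are not incurred, since the manipulation is purely at the level of coefficient matrices), and the identification of $C$ is immediate.
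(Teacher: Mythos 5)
Your proof is correct, but it is organised differently from the paper's. The paper first establishes the vector identity $[\xi,X]=-4i\Theta A X$ by applying the derivation property of the commutator once at the level of the column vector $X$ (using the symmetry of $A$ and the antisymmetry of $\Theta$ to merge the two resulting terms), and then applies the derivation property a second time to get $[\xi,\eta]=[\xi,X]^{\rT}BX+X^{\rT}B[\xi,X]$, from which \eqref{CCC} drops out in two lines of matrix algebra. You instead expand everything in components and use the four-term Leibniz identity for $[\alpha\beta,\gamma\delta]$, which forces you to track four sums and to recognise the transposed matrix element $(A\Theta B)_{mj}=-(B\Theta A)_{jm}$ in the last two. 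Both arguments rest on the same two facts (the commutators $[X_a,X_b]=2i\theta_{ab}$ are central, and the symmetry/antisymmetry of $A$, $B$, $\Theta$), and your index bookkeeping checks out, including the correct observation that no scalar remainder arises because the monomials $X_lX_k$ are read off directly against the coefficient matrix without reordering. What the paper's route buys is brevity and a reusable intermediate formula ($[\xi,X]=-4i\Theta A X$ is essentially the infinitesimal version of the symplectic action that underlies Appendix~B); what your route buys is that it makes completely explicit where each symmetry assumption enters and why the result is exactly quadratic with no constant term.
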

\begin{proof}
In view of (\ref{xieta}), it follows from the antisymmetry  and derivation properties of the commutator that
\begin{align}
\nonumber
    [\xi, X ]
    & =
    -[X,X^{\rT}]A X
    +
    (X^{\rT} A [X, X^{\rT} ])^{\rT}\\
\nonumber
    & =
    -2i\Theta A X
    +
    2i
    (X^{\rT} A \Theta )^{\rT}\\
\label{xiXcomm}
    & =
    -4i\Theta A X,
\end{align}
where use is also made of the symmetry of $A$ and the antisymmetry of the CCR matrix $\Theta$. By a similar reasoning, (\ref{xieta}) and (\ref{xiXcomm}) imply that
\begin{align*}
    [\xi,\eta]
    & =
    [\xi,X]^{\rT} B X
    +
    X^{\rT} B [\xi,X ]\\
    & =
    -4i((\Theta A X)^{\rT} B X
    +
    X^{\rT} B \Theta A X)\\
    & =
    4iX^{\rT}(A \Theta B
    -
    B \Theta A) X,
\end{align*}
which establishes (\ref{XCX}) in view of (\ref{CCC}).
\end{proof}

If a quadratic form of the quantum variables $X_1, \ldots, X_n$, mentioned in Lemma~\ref{lem:xieta}, is specified by an arbitrary (not necessarily symmetric) matrix, then its antisymmetric part  gives rise to an additive constant which does not contribute to the commutator. Indeed, due to the CCRs (\ref{Theta}), for any antisymmetric matrix $C:=(c_{jk})_{1\< j,k\< n} = -C^\rT \in \mC^{n\x n}$,
\begin{align*}
    X^{\rT} C X
    & =
    \sum_{j,k=1}^n
    c_{jk}X_jX_k
    =
    \sum_{j,k=1}^n
    c_{kj}X_kX_j\\
    & =
    -\sum_{j,k=1}^n
    c_{jk}X_kX_j
    =
    -\sum_{j,k=1}^n
    c_{jk}(X_jX_k - [X_j,X_k])\\
    & =
    -X^{\rT} C X + 2i
    \sum_{j,k=1}^n
    c_{jk}\theta_{jk}
    =
    -X^{\rT} C X -
    2i\Tr (C\Theta)\\
    & =
    -i \Tr (C\Theta).
\end{align*}
This clarifies why both quadratic forms in (\ref{xieta}), whose commutator in (\ref{XCX}) is the subject of the lemma, are specified by symmetric matrices without loss of generality. Also, Lemma~\ref{lem:xieta} shows that the quadratic forms of a given set of quantum variables, satisfying the CCRs, form a Lie algebra.

\section{A product formula for quadratic exponential functions of quantum variables with CCRs}
\label{sec:prod}
\renewcommand{\theequation}{B\arabic{equation}}
\setcounter{equation}{0}

In view of 
Lemma~\ref{lem:xieta}, the Lie algebra of quadratic forms $X^\rT C X$ of self-adjoint  quantum variables $X_1, \ldots, X_n$ satisfying the CCRs (\ref{CCR}) with a nonsingular CCR matrix $\Theta$ is isomorphic to the Lie algebra
\begin{equation}
\label{fC}
    \fC := \{\Theta C:\ C = C^\rT \in \mC^{n\x n}\}
\end{equation}
 of complex Hamiltonian matrices 
 (which generate the group of complex symplectic matrices $S\in \mC^{n\x n}$ satisfying $S\Theta S^\rT = \Theta$, so that $\re^{zG}$ is symplectic for any $z\in \mC$ and $G \in \fC$).
Indeed, since (\ref{CCC}) is equivalent to
$$
    4i\Theta C
    =
    [4i\Theta A, 4i\Theta B]
$$
(in view of $\det \Theta \ne 0$),
this isomorphism is described by the correspondence
\begin{equation}
\label{iso}
    X^\rT C X\longleftrightarrow 4i \Theta C,
\end{equation}
which leads to the following product formula for the exponentials of the quadratics (cf. \cite[Comment~3]{G_1974}).
To this end, we will need Dynkin's representation \cite{D_1947}
\begin{align}
\nonumber
    \ln (\re^\xi \re^\eta)
    = &
    -
    \sum_{k=1}^{+\infty}
    \frac{(-1)^k}{k}
    (\re^\xi \re^\eta - \cI)^k\\
\nonumber
    = &
    -
    \sum_{k=1}^{+\infty}
    \frac{(-1)^k}{k}
    \sum_{a_1,b_1,\ldots,a_k,b_k\>0:\ a_j+b_j>0\ {\rm for\ all}\ j=1,\ldots, k}\\
\nonumber
    & \rprod_{j=1}^k
    \Big(
    \frac{1}{a_j!b_j!}\xi^{a_j}\eta^{b_j}
    \Big)\\
\nonumber
    =&
    -
    \sum_{k=1}^{+\infty}
    \frac{(-1)^k}{k}
    \sum_{a_1,b_1,\ldots,a_k,b_k\>0:\ a_j+b_j>0\ {\rm for\ all}\ j=1,\ldots, k}\\
\nonumber
    & \frac{1}{\sum_{j=1}^k (a_j+b_j)}
    \Big(\rprod_{j=1}^{k-1}
    \ad_\xi^{a_j}
    \ad_\eta^{b_j}
    \Big)
    ((
    \ad_\xi^{a_k}
    \ad_\eta^{b_k-1})(\eta))\\
\label{Dynk}
    =&
    \xi + \eta + T(\ad_\xi,\ad_\eta,[\xi,\eta]),
\end{align}
where $\xi$ and $\eta$ are (in general, noncommuting) variables,
and the linear superoperator $\ad_\sigma:= [\sigma,\cdot]$ is the commutator with a given operator $\sigma$. Here,
\begin{align}
\nonumber
    T(\alpha,\beta,\gamma)
    =&
    \sum_{k=1}^{+\infty}
    \sum_{a_1,b_1,\ldots, a_k,b_k\>0}
    c_{a_1b_1\ldots a_kb_k}\\
\label{DynkT}
    & \x
    (\alpha^{a_1}\circ \beta^{b_1}\circ \ldots \circ \alpha^{a_k}\circ \beta^{b_k})(\gamma)
\end{align}
is a formal power series which consists of monomials of superoperators $\alpha$ and $\beta$ acting on an operator $\gamma$, where the coefficients
$c_{a_1b_1\ldots a_kb_k} \in \mR$ can be recovered from (\ref{Dynk}). For what follows, we associate with the nonsingular CCR matrix $\Theta$ a map $\cE$ which maps a complex symmetric matrix $C=C^\rT\in \mC^{n\x n}$ to the matrix
\begin{equation}
\label{cE}
    \cE(C):= \re^{4i\Theta C}.
\end{equation}
This can also be identified with the exponential map of the infinitesimal generators from the Lie algebra $\fC$ in (\ref{fC}) to the group of complex symplectic matrices.

\begin{lemma}
\label{lem:Dynk}
Suppose the CCR matrix $\Theta$ in (\ref{CCR}) is nonsingular, and $\xi$ and $\eta$  are the quantum variables described in Lemma~\ref{lem:xieta}. Then
\begin{equation}
\label{XEX}
    \re^{\xi} \re^{\eta} = \re^{X^\rT E X},
\end{equation}
where $E$ is a complex symmetric matrix of order $n$ computed as
\begin{equation}
\label{EAB}
    E = (4i\Theta )^{-1} \ln(\cE(A)\cE(B)),
\end{equation}
with $\cE$ the map from (\ref{cE}).
\end{lemma}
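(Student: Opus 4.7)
The plan is to read off the identity from Dynkin's representation (\ref{Dynk}) together with the closure property established in Lemma~\ref{lem:xieta}. Applied to $\xi = X^\rT A X$ and $\eta = X^\rT B X$, Dynkin's formula expresses $\ln(\re^\xi \re^\eta)$ as $\xi + \eta + T(\ad_\xi,\ad_\eta,[\xi,\eta])$, which is a linear combination of $\xi$, $\eta$, and iterated commutators of the form $(\ad_\xi^{a_1}\circ \ad_\eta^{b_1}\circ \cdots )([\xi,\eta])$. Lemma~\ref{lem:xieta} shows that $[\xi,\eta] = X^\rT C X$ for a complex symmetric $C$, and since each operator $\ad_\xi$ and $\ad_\eta$ preserves the subspace of quadratic forms $X^\rT M X$ with complex symmetric $M$ (again by Lemma~\ref{lem:xieta}), every term of the series is of this form. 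Summing formally yields $\ln(\re^\xi\re^\eta) = X^\rT E X$ for some complex symmetric matrix $E$, establishing (\ref{XEX}).

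To identify $E$, I would transfer the computation to the Lie algebra $\fC$ in (\ref{fC}) through the isomorphism (\ref{iso}), which maps $\xi \mapsto 4i\Theta A$, $\eta \mapsto 4i\Theta B$, and, by the equivalent form $4i\Theta C = [4i\Theta A, 4i\Theta B]$ of (\ref{CCC}), intertwines the brackets. Since Dynkin's expansion $\xi + \eta + T(\ad_\xi,\ad_\eta,[\xi,\eta])$ in (\ref{Dynk}) is built entirely from brackets, it commutes with this isomorphism term-by-term, so that the same coefficients $c_{a_1 b_1\ldots a_k b_k}$ in (\ref{DynkT}) relate $4i\Theta E$ to $4i\Theta A$ and $4i\Theta B$ via the matrix BCH series. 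That series is nothing but $\ln(\re^{4i\Theta A}\re^{4i\Theta B}) = \ln(\cE(A)\cE(B))$ by the definition (\ref{cE}) of $\cE$. Hence $4i\Theta E = \ln(\cE(A)\cE(B))$, and since $\Theta$ is assumed nonsingular, solving gives (\ref{EAB}); the symmetry of $E$ follows because $4i\Theta E$ lies in $\fC$ as a commutator-polynomial of elements of $\fC$.

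The main obstacle I anticipate is the rigorous handling of Dynkin's series, which is, a priori, only a formal power series and converges only for suitably small $\xi, \eta$. A clean way to proceed is to introduce a scaling parameter $z \in \mC$, replacing $A,B$ by $zA, zB$, and to observe that on the matrix side the identity $\ln(\re^{4iz\Theta A}\re^{4iz\Theta B}) = 4iz\Theta A + 4iz\Theta B + T(\ad_{4iz\Theta A},\ad_{4iz\Theta B},[4iz\Theta A, 4iz\Theta B])$ holds as an absolutely convergent matrix series in a neighborhood of $z=0$; by the term-by-term match just described, the corresponding operator series converges in the same neighborhood on the finite-dimensional linear span of quadratic forms $X^\rT M X$, delivering (\ref{XEX})--(\ref{EAB}) there. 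The identity then extends to all admissible $A,B$ by analytic continuation, with the appropriate branch of the matrix logarithm implicitly understood in (\ref{EAB}).
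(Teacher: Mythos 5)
Your argument follows the paper's own proof essentially verbatim: apply Dynkin's representation to $\xi=X^\rT AX$ and $\eta=X^\rT BX$, use the closure of quadratic forms under commutation (Lemma~\ref{lem:xieta}) and the Lie-algebraic isomorphism (\ref{iso}) to transfer the BCH series to the matrix side, and read off $4i\Theta E=\ln(\cE(A)\cE(B))$. Your closing remarks on convergence of the Dynkin series and analytic continuation are a welcome refinement that the paper leaves implicit, but they do not change the route.
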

\begin{proof}
By applying (\ref{Dynk}) and (\ref{DynkT}) to the quantum variables in (\ref{xieta}),
specified by arbitrary symmetric matrices $A, B \in \mC^{n\x n}$,  and using the Lie-algebraic isomorphism (\ref{iso}), it follows that
$$
    \zeta := \ln (\re^\xi \re^\eta)
    =
    X^\rT E X,
$$
where the matrix $4i\Theta E$ is related to $4i\Theta A$ and $4i\Theta B$ in the same way as the quantum variable $\zeta$ is related to $\xi$ and $\eta$:
$$
    4i\Theta E = 4i\Theta A+4i\Theta B + T(\ad_{4i\Theta A}, \ad_{4i\Theta B}, [4i\Theta A, 4i\Theta B]),
$$
and hence,
$$
    4i\Theta E = \ln(\cE(A)\cE(B)),
$$
thus establishing (\ref{XEX}) and (\ref{EAB}).
\end{proof}

By induction, Lemma~\ref{lem:Dynk} can be extended to products of an arbitrary number of quadratic-exponential functions of the quantum variables:
\begin{equation}
\label{CCE}
    \rprod_{k=1}^N
    \re^{X^\rT C_k X} = \re^{X^\rT E X},
    \qquad
    E= (4i\Theta)^{-1}\ln \rprod_{k=1}^N \cE(C_k),
\end{equation}
where $C_1, \ldots, C_N$ are complex symmetric matrices.  For example, if $N=3$ and $C_3:=-C_1$, the matrix $E$ in (\ref{CCE}) admits a closed-form representation:
\begin{align}
\nonumber
    E
    & =
    (4i\Theta)^{-1}
    \ln
    (
        \cE(C_1)
        \cE(C_2)
        \cE(-C_1)
    )\\
\nonumber
    & =
    (4i\Theta)^{-1}
    \ln
    \exp
    (4i
        \cE(C_1)
        \Theta C_2
        \cE(-C_1)
    )\\
\nonumber
    & =
    \Theta^{-1}
        \cE(C_1)
        \Theta C_2
        \cE(-C_1)\\
\nonumber
    & =
        \cE(\Theta^{-1}C_1\Theta )
        C_2
        \cE(-C_1)\\
\nonumber
        & =
        \re^{4iC_1 \Theta }
        C_2
        \re^{-4i\Theta C_1},
\end{align}
where we have repeatedly used the identity $Sf(Z)S^{-1} = f(SZ S^{-1})$ for similarity transformations of functions of matrices \cite{H_2008}.
 Of particular interest for our purposes is the following case, where the presence of symmetries  leads to additional properties of the resulting matrix $E$ in (\ref{CCE}).

\begin{lemma}
\label{lem:tri}
Suppose the self-adjoint quantum variables $X_1, \ldots, X_n$ in (\ref{X}) have a nonsingular CCR matrix $\Theta$ in (\ref{CCR}). Also, let $C_1$ and $C_2$ be real symmetric matrices of order $n$. Then the relation (\ref{CCE}), considered for $N=3$ and $C_3:= C_1$, holds with a real symmetric matrix $E$.
\end{lemma}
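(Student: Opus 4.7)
The plan is to combine the palindromic arrangement $C_1, C_2, C_1$ with the Dynkin-based representation of Lemma \ref{lem:Dynk} (extended to three factors in (\ref{CCE})) to pin down $E$ as a real matrix.

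First, applying (\ref{CCE}) with $N=3$ and $C_3:=C_1$ yields a complex symmetric matrix $E\in\mC^{n\x n}$ such that $\re^{X^\rT C_1 X}\re^{X^\rT C_2 X}\re^{X^\rT C_1 X}=\re^{X^\rT E X}$ and
\begin{equation*}
\re^{4i\Theta E}=M:=\cE(C_1)\cE(C_2)\cE(C_1),
\end{equation*}
with $\cE$ the map from (\ref{cE}). The task therefore reduces to showing $\bar E=E$.

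Next, for any real symmetric $C$, the generator $4i\Theta C$ is pure imaginary (since $\Theta$ and $C$ are real), whence $\overline{\cE(C)}=\re^{-4i\Theta C}=\cE(C)^{-1}$. Conjugating $M$ factorwise gives $\overline M=\cE(C_1)^{-1}\cE(C_2)^{-1}\cE(C_1)^{-1}$, which by the palindromic structure of $M$ coincides with $M^{-1}=(\cE(C_1)\cE(C_2)\cE(C_1))^{-1}$. Hence $\overline M=M^{-1}$, and with $H:=4i\Theta E$ this reads $\re^{\bar H}=\overline M=M^{-1}=\re^{-H}$.

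The final step is to pass from the matrix identity $\re^{\bar H}=\re^{-H}$ to $\bar H=-H$, which then yields $-4i\Theta\bar E=-4i\Theta E$ and, by the nonsingularity of $\Theta$, the desired $\bar E=E$. The main obstacle is precisely this passage, since it is obstructed by the branch ambiguity of the matrix logarithm. I would circumvent this by invoking the Dynkin series of Lemma \ref{lem:Dynk} directly: it expresses $H$ uniquely as a formal Lie series of nested commutators in the pure-imaginary generators $4i\Theta C_1$ and $4i\Theta C_2$, and complex conjugation reverses the sign of each generator while the symmetric triple $C_1,C_2,C_1$ ensures that the conjugated series reproduces termwise the series for $-H$, giving $\bar H=-H$ at the level of the formal Lie algebra and completing the proof.
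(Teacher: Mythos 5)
Your proof is correct and follows essentially the same route as the paper's: conjugate $\cE(C_1)\cE(C_2)\cE(C_1)$ factorwise, use $\overline{\cE(C)}=\cE(-C)$ together with the palindromic arrangement to identify the conjugate with the inverse, and deduce $\overline{E}=E$ via the nonsingularity of $\Theta$. The only difference is the final step, where the paper simply invokes the identities $\overline{\ln M}=\ln \overline{M}$ and $\ln(M^{-1})=-\ln M$, whereas you justify the same implication through the Dynkin series of Lemma~\ref{lem:Dynk} — a legitimate, arguably more careful, way to sidestep the branch ambiguity of the matrix logarithm.
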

\begin{proof}
Since the matrix $E$ is symmetric, it remains to prove that $E$ is real. In the case under consideration, the second equality in (\ref{CCE}) takes the form
$$
    E =
    (4i\Theta)^{-1}
    \ln
    (
    \cE(C_1)
    \cE(C_2)
    \cE(C_1)
    ),
$$
whose complex conjugation leads to
\begin{align*}
    \overline{E}
    & =
    -(4i\Theta)^{-1}
    \ln
    \overline{
    \big(
    \cE(C_1)
    \cE(C_2)
    \cE(C_1)
    \big)}\\
    & =
    -(4i\Theta)^{-1}
    \ln
    \big(
    \cE(-C_1)
    \cE(-C_2)
    \cE(-C_1)
    \big)\\
    & =
    -(4i\Theta)^{-1}
    \ln
    \Big(
    \big(
    \cE(C_1)
    \cE(C_2)
    \cE(C_1)
    \big)^{-1}
    \Big)\\
    & =
    (4i\Theta)^{-1}
    \ln
    (
    \cE(C_1)
    \cE(C_2)
    \cE(C_1)
    ) = E,
\end{align*}
thus implying that $E \in \mR^{n\x n}$. Here, we have also used the properties $\overline{\ln C} = \ln \overline{C}$ and $\ln (C^{-1}) = -\ln C$ of the logarithm for complex matrices.
\end{proof}

The properties of the matrix $E$ in Lemma~\ref{lem:tri} are in line with the self-adjointness of the quantum variable
$
    \re^{X^\rT C_1 X}
    \re^{X^\rT C_2 X}
    \re^{X^\rT C_1 X}
$ in the case of real symmetric matrices $C_1$ and $C_2$. Again, by induction, the lemma extends as
\begin{equation}
\label{CCEsymm1}
    \Big(\lprod_{k=1}^N
    \re^{X^\rT C_k X}
    \Big)
    \re^{X^\rT C_0 X}
    \rprod_{k=1}^N
    \re^{X^\rT C_k X}
    =
    \re^{X^\rT E X},
\end{equation}
with
\begin{equation}
\label{CCEsymm2}
    \cE(E)
    =
    \Big(
    \lprod_{k=1}^N
    \cE(C_k)
    \Big)
    \cE(C_0)
    \rprod_{k=1}^N
    \cE(C_k)
\end{equation}
for arbitrary real symmetric matrices $C_0, \ldots, C_N$ leading to a real symmetric matrix $E$.

\section{Complex symplectic factorizations for second order matrices}
\label{sec:fact}
\renewcommand{\theequation}{C\arabic{equation}}
\setcounter{equation}{0}

Consider an inverse problem of finding a factorization (\ref{CCEsymm2}) for the matrix  $\cE(E)$, given by (\ref{cE}), in the case of two quantum variables ($n=2$) for a given diagonal matrix
\begin{equation}
\label{EEE}
    E :=
    \begin{bmatrix}
      a & 0\\
      0 & b
    \end{bmatrix}
    =
    a E_1 + bE_2,
\end{equation}
where $a$ and $ b$  are positive real numbers, and
\begin{equation}
\label{E12}
      \qquad
    E_1 :=
    \begin{bmatrix}
        1 & 0 \\
        0 & 0
    \end{bmatrix},
    \qquad
    E_2 :=
    \begin{bmatrix}
        0 & 0 \\
        0 & 1
    \end{bmatrix}
\end{equation}
are auxiliary matrices.
 In the case of $(2\x 2)$-matrices being considered, the CCR matrix can be represented as
\begin{equation}
\label{cbJ}
    \Theta
    =
    \theta \bJ
    =
    \begin{bmatrix}
        0 & \theta\\
        -\theta & 0
    \end{bmatrix},
\end{equation}
where $\theta\in \mR\setminus \{0\}$, and  use is made of the matrix $\bJ$ from  (\ref{bJ}). Since
$$
    (i \Theta E )^2
    =
    -\theta ^2
    \begin{bmatrix}
      0 & b\\
      -a & 0
    \end{bmatrix}^2
    =
    ab\theta ^2
    I_2
$$
in view of (\ref{EEE})--(\ref{cbJ}), then
the corresponding map (\ref{cE}) acts as
\begin{align}
\nonumber
    \cE(E)
    & =
    \sum_{k\>0}
    (4i\Theta E)^{2k}
    \Big(
        \frac{1}{(2k)!}
        I_2
        +
        \frac{1}{(2k+1)!}
        4i\Theta E
    \Big)\\
\nonumber
    & =
    \sum_{k\>0}
    (16ab\theta ^2)^k
    \left(
        \frac{1}{(2k)!}
            I_2
        +
        \frac{4\theta i}{(2k+1)!}
        \begin{bmatrix}
          0 & b\\
          -a & 0
        \end{bmatrix}
    \right)\\
\label{expTE}
    & =
    \cosh(4\theta \sqrt{ab}) I_2  +
    i
    \sinh(4\theta \sqrt{ab})
    \begin{bmatrix}
        0 &   \sqrt{\tfrac{b}{a}}\\
        -\sqrt{\tfrac{a}{b}} & 0
    \end{bmatrix}.
\end{align}
In the limit, as $b\to 0$ or $a\to 0$, the identity (\ref{expTE}) leads respectively to
\begin{equation}
\label{lims}
    \cE(aE_1)
    =
    \begin{bmatrix}
      1 & 0\\
      -4a\theta i & 1
    \end{bmatrix},
    \qquad
    \cE(bE_2)
    =
    \begin{bmatrix}
      1 & 4b\theta i\\
      0 & 1
    \end{bmatrix},
\end{equation}
which can also be obtained by using the nilpotence of strictly triangular matrices (more precisely, the second powers of the matrices
$
    \bJ E_1
    =
    {\small\begin{bmatrix}
      0 & 0\\
      -1 & 0
    \end{bmatrix}}
$ and $
    \bJ E_2
    =
    {\small\begin{bmatrix}
      0 & 1\\
      0 & 0
    \end{bmatrix}}
$ vanish). The triangular matrices (\ref{lims}) provide ``building blocks'' for the following  complex symplectic factorization.

\begin{lemma}
\label{lem:sympfact}
For any $a,b>0$ in (\ref{EEE}), there exist
$\alpha, \beta \in \mR$ satisfying
\begin{equation}
\label{sympfact}
    \cE(\alpha E_1)
    \cE(\beta E_2)
    \cE(\alpha E_1)
    =
    \cE(E),
\end{equation}
where the map $\cE$ in (\ref{cE}) is associated with the CCR matrix $\Theta$ in (\ref{cbJ}): \begin{equation}
\label{sympfact1}
    \alpha
    =
    \frac{1}{4\theta}
    \tanh(2\theta \sqrt{ab})
    \sqrt{\frac{a}{b}},
    \quad
    \beta
    =
    \frac{1}{4\theta}
    \sinh(4\theta \sqrt{ab})
    \sqrt{\frac{b}{a}}.
\end{equation}
\end{lemma}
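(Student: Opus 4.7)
The plan is to verify (\ref{sympfact}) by a direct matrix computation in order two, exploiting the triangular forms (\ref{lims}) and the explicit expression (\ref{expTE}). Writing $x:= 4i\alpha\theta$ and $y:= 4i\beta\theta$, the factors on the left-hand side of (\ref{sympfact}) are
\begin{equation*}
    \cE(\alpha E_1)
    =
    \begin{bmatrix}
      1 & 0\\
      -x & 1
    \end{bmatrix},
    \qquad
    \cE(\beta E_2)
    =
    \begin{bmatrix}
      1 & y\\
      0 & 1
    \end{bmatrix},
\end{equation*}
so that a routine multiplication gives
\begin{equation*}
    \cE(\alpha E_1)\cE(\beta E_2)\cE(\alpha E_1)
    =
    \begin{bmatrix}
      1-xy & y\\
      -2x+x^2 y & 1-xy
    \end{bmatrix}.
\end{equation*}
The equal diagonal entries of this product are consistent with the equal diagonal entries of $\cE(E)$ in (\ref{expTE}) (which is a manifestation of the symmetry behind Lemma~\ref{lem:tri} with $C_1 = \alpha E_1$ and $C_2 = \beta E_2$), so that (\ref{sympfact}) reduces to three scalar equations rather than four.

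Next I would equate the $(1,1)$, $(1,2)$, $(2,1)$ entries of the triple product with those of $\cE(E)$ in (\ref{expTE}), using $xy = -16\alpha\beta\theta^2$, yielding
\begin{align*}
    1+16\alpha\beta\theta^2
    & = \cosh(4\theta\sqrt{ab}),\\
    4\beta\theta
    & = \sinh(4\theta\sqrt{ab})\sqrt{b/a},\\
    8\alpha\theta\,(1+8\alpha\beta\theta^2)
    & = \sinh(4\theta\sqrt{ab})\sqrt{a/b}.
\end{align*}
The second equation delivers the formula for $\beta$ in (\ref{sympfact1}) immediately. Invoking the hyperbolic identity $\cosh(4\theta\sqrt{ab}) - 1 = 2\sinh^2(2\theta\sqrt{ab})$, the first equation yields $8\alpha\beta\theta^2 = \sinh^2(2\theta\sqrt{ab})$, and dividing by the already-known value of $\beta$ together with $\sinh(4\theta\sqrt{ab}) = 2\sinh(2\theta\sqrt{ab})\cosh(2\theta\sqrt{ab})$ produces the expression for $\alpha$ in (\ref{sympfact1}).

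Finally, the third equation must be checked for consistency. Substituting the computed $\alpha, \beta$ and using $1+8\alpha\beta\theta^2 = 1+\sinh^2(2\theta\sqrt{ab}) = \cosh^2(2\theta\sqrt{ab})$ together with the double-angle identity for $\sinh$ reduces this equation to a tautology. The potential obstacle is precisely this consistency check between the two off-diagonal entries: on the face of it one has three equations for two unknowns, but the symmetry of the conjugation structure $\cE(\alpha E_1)\cE(\beta E_2)\cE(\alpha E_1)$ ensures that the $(1,1)$ and $(2,1)$ equations together are equivalent (given the value of $\beta$) to the single relation $16\alpha\beta\theta^2 = \cosh(4\theta\sqrt{ab})-1$, so the system is in fact consistent and the formulas in (\ref{sympfact1}) are the unique real solution with $\alpha, \beta$ of the same sign as $\theta^{-1}$.
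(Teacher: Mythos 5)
Your proposal is correct and follows essentially the same route as the paper: compute the triple product of the triangular matrices (\ref{lims}), compare entries with (\ref{expTE}), and solve via the identities $\cosh(2\phi)-1=2\sinh^2\phi$ and $\sinh(2\phi)=2\sinh\phi\cosh\phi$. The only cosmetic difference is that you verify the redundant $(2,1)$ equation explicitly, whereas the paper dispatches it by observing that the product has unit determinant (and your parenthetical about the solution having ``the same sign as $\theta^{-1}$'' is unnecessary --- both $\alpha$ and $\beta$ in (\ref{sympfact1}) are positive for either sign of $\theta$, and the solution is unique outright).
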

\begin{proof}
Indeed, the left-hand side of (\ref{sympfact}) is computed as
\begin{align}
\nonumber
\cE(\alpha E_1) &\cE(\beta E_2) \cE(\alpha E_1)\\
\nonumber
 & =
    \begin{bmatrix}
      1 & 0\\
      -4\alpha \theta i & 1
    \end{bmatrix}
    \begin{bmatrix}
      1 & 4\beta \theta i\\
      0 & 1
    \end{bmatrix}
    \begin{bmatrix}
      1 & 0\\
      -4\alpha \theta i & 1
    \end{bmatrix}\\
\nonumber
& =
    \begin{bmatrix}
      1 & 0\\
      -4\alpha \theta i & 1
    \end{bmatrix}
    \begin{bmatrix}
      1+16 \alpha \beta \theta^2 & 4\beta \theta i\\
      -4\alpha \theta i  & 1
    \end{bmatrix}\\
\label{left}
& =
    \begin{bmatrix}
      1+16 \alpha \beta \theta^2 & 4\beta \theta i\\
      -8\alpha \theta(1+8 \alpha \beta \theta^2) i  & 1+16 \alpha \beta \theta^2
    \end{bmatrix},
\end{align}
which is a matrix with unit determinant. By comparing the right-hand sides  of (\ref{expTE}) and (\ref{left}), it follows that (\ref{sympfact}) is equivalent to a set of two equations
$$
    1+16 \alpha \beta \theta^2
    =
    \cosh(4\theta \sqrt{ab}),
    \qquad
    4\beta \theta
     =
    \sinh(4\theta \sqrt{ab})
    \sqrt{\frac{b}{a}}
$$
for $\alpha$ and $\beta$,
whose solution is given by (\ref{sympfact1}) in view of the identities $\cosh (2\phi) - 1= 2(\sinh \phi)^2$ and $\sinh (2\phi) = 2\sinh \phi \cosh \phi$.
\end{proof}

In particular, for the quantum-mechanical position and momentum operators $q$ and $p:= -i\d_q$, whose CCR matrix is given by (\ref{cbJ}) with $\theta= \frac{1}{2}$ in view of $[q,p]=i\cI$, application of (\ref{CCEsymm1}), (\ref{CCEsymm2}), (\ref{sympfact}) and (\ref{sympfact1}) leads to
\begin{align}
\nonumber
    \re^{a q^2 + b p^2}
    & =
    \re^{a q^2 - b \d_q^2}\\
\label{eee}
    & =
    \re^{\alpha q^2}
    \re^{\beta p^2}
    \re^{\alpha q^2}
    =
    \re^{\alpha q^2}
    \re^{-\beta \d_q^2}
    \re^{\alpha q^2},
\end{align}
where
\begin{equation}
\label{sympfact3}
    \alpha
    =
    \frac{1}{2}
    \tanh(\sqrt{ab})
    \sqrt{\frac{a}{b}},
    \qquad
   \beta
     =
    \frac{1}{2}
    \sinh(2\sqrt{ab})
    \sqrt{\frac{b}{a}}.
\end{equation}
There is an analogy between (\ref{sympfact}) (and its application to (\ref{eee})) and the factorizations in \cite[Eqs. (5), (6)]{CG_2010}.
The relations (\ref{eee}) (with (\ref{sympfact3})) can be verified
on a dense set of infinitely differentiable functions $f:\mR\to \mC$ (of the position variable) with a sufficiently fast decay of derivatives at infinity (a subset of the Schwartz space \cite{V_2002}). The following lemma outlines such a verification for Gauss\-i\-an-shaped test functions
\begin{equation}
\label{fgauss}
    f(q)
    :=
    \re^{-\gamma(q-\mu)^2},
    \qquad
    \gamma > 0,\
    \mu \in \mR.
\end{equation}
Up to a normalization factor, (\ref{fgauss}) describes the Gaussian PDF $\sqrt{\frac{\gamma}{\pi}}f$ with mean $\mu$ and variance $\frac{1}{2\gamma}$.

\begin{lemma}
\label{lem:ver}
The relation (\ref{eee}) for the position and momentum operators, with $\alpha$ and $\beta$ given by (\ref{sympfact3}), holds at the test functions $f$ in (\ref{fgauss}), with \begin{equation}
\label{gammagood}
    \frac{\alpha}{1+4\alpha \beta} < \gamma-\alpha < \frac{1}{4\beta}.
\end{equation}
\end{lemma}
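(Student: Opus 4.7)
The plan is to evaluate both sides of (\ref{eee}) at $f(q) = \re^{-\gamma(q-\mu)^2}$ in closed form and check that the two resulting Gaussians agree, with the admissibility range (\ref{gammagood}) emerging from the requirement that all intermediate steps produce well-defined Gaussians.

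First I would process the right-hand side one factor at a time by Gaussian arithmetic. Multiplication by $\re^{\alpha q^2}$ sends $f$ to a Gaussian with precision $\gamma_1 := \gamma - \alpha$ and a shifted centre by completing the square. The middle factor $\re^{-\beta \d_q^2}$ is the backward-heat operator, which, via the Gaussian convolution (equivalently, Fourier multiplier) identity $\re^{t\d_q^2}\re^{-\Gamma(q-M)^2} = (1+4t\Gamma)^{-1/2}\,\re^{-\Gamma(q-M)^2/(1+4t\Gamma)}$ (valid whenever $1+4t\Gamma>0$), maps this Gaussian to one with unchanged centre and new precision $\gamma_2 := \gamma_1/(1-4\beta\gamma_1)$; convergence of the underlying Gaussian integral requires $\gamma_1 < 1/(4\beta)$, giving the upper bound in (\ref{gammagood}). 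A final multiplication by $\re^{\alpha q^2}$ produces a Gaussian with precision $\gamma_3 := \gamma_2 - \alpha$, and the positivity $\gamma_3 > 0$ needed for the final result to be a decaying Gaussian rearranges algebraically into the lower bound $\gamma - \alpha > \alpha/(1+4\alpha\beta)$ of (\ref{gammagood}).

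Next I would compute the left-hand side directly. Since $H := aq^2 - b\d_q^2$ is quadratic in $q, \d_q$, its flow $u(s,q) := (\re^{sH}f)(q)$ preserves the Gaussian ansatz $u(s,q) = \re^{-\Gamma(s)(q-M(s))^2 + \Psi(s)}$; substituting into $\d_s u = Hu$ and matching powers of $(q-M)$ yields the Riccati ODE $\dot\Gamma = 4b\Gamma^2 - a$ together with the linear ODEs $\dot M = aM/\Gamma$ and $\dot\Psi = aM^2 + 2b\Gamma$, with initial data $(\gamma,\mu,0)$. The Riccati equation integrates in closed form via $\Gamma = \frac{1}{2}\sqrt{a/b}\,\coth\phi$, which reduces it to $\dot\phi = -2\sqrt{ab}$, and evaluation at $s=1$ expresses $\Gamma(1)$ as an explicit M\"obius transform of $\gamma$ with hyperbolic coefficients in $\sqrt{ab}$ and $\sqrt{a/b}$.

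Finally I would identify the two M\"obius expressions for $\gamma_3$ and $\Gamma(1)$ by substituting (\ref{sympfact3}) and applying the identities $4\alpha\beta = 2\sinh^2(\sqrt{ab})$ and $1+4\alpha\beta = \cosh(2\sqrt{ab})$. Both sides collapse to the common form $(\gamma - \tfrac{1}{2}\sqrt{a/b}\tanh(2\sqrt{ab}))/(1 - 2\gamma\sqrt{b/a}\tanh(2\sqrt{ab}))$, and the analogous matching of the shifted centres $M(1)$ and of the amplitude prefactors $\re^{\Psi(1)}$ follows from integrating $\dot M = aM/\Gamma$ and $\dot\Psi$ against the same hyperbolic phase $\phi$. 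The principal obstacle is this final algebraic identification: the two M\"obius transforms are reached along very different routes (a three-step shear/heat/shear factorization on the right versus a one-shot Riccati flow on the left), and reconciling them hinges on the double-angle identities $\tanh\tau\,\sinh(2\tau) = 2\sinh^2\tau$ and $1 + 2\sinh^2\tau = \cosh(2\tau)$ that tie the factorization parameters in (\ref{sympfact3}) to the spectral data of $H$.
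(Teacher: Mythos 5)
Your proposal is correct and follows essentially the same route as the paper: sequential Gaussian arithmetic on the three factors of the right-hand side of (\ref{eee}) (which is exactly where the two inequalities in (\ref{gammagood}) arise), and a Gaussian ansatz for the flow of $aq^2-b\d_q^2$ leading to the same Riccati equation (\ref{dot1}) on the left-hand side. In fact you go slightly further than the paper, which declares the final comparison ``tedious'' and omits it: your closed-form $\coth$ integration of the Riccati flow and the identification of both precisions with the M\"obius transform $\big(\gamma-\tfrac{1}{2}\sqrt{a/b}\tanh(2\sqrt{ab})\big)\big/\big(1-2\gamma\sqrt{b/a}\tanh(2\sqrt{ab})\big)$ check out and supply the missing verification.
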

\begin{proof}
Since $\gamma>\alpha$, then
\begin{align}
\nonumber
    \re^{\alpha q^2}
    f(q)
    & =
    \re^{\alpha q^2 -\gamma(q-\mu)^2}\\
\nonumber
    & =
    \re^{-(\gamma-\alpha)\big(q-\tfrac{\gamma\mu}{\gamma-\alpha}\big)^2 - \gamma \mu^2 + \tfrac{(\gamma \mu)^2}{\gamma-\alpha}}\\
\label{opf1}
    &=
    \re^{-(\gamma-\alpha)\big(q-\tfrac{\gamma\mu}{\gamma-\alpha}\big)^2 +
    \tfrac{\alpha \gamma \mu^2}{\gamma-\alpha}}
\end{align}
is proportional to the Gaussian PDF with mean $\frac{\gamma\mu}{\gamma-\alpha}$ and variance $\frac{1}{2(\gamma-\alpha)}$. Now, $\re^{\beta \d_q^2}$ is an integral operator with the Markov transition kernel (at time $t = 2\beta$) for an auxiliary Brownian motion $\xi(t)= \xi(0) + W(t)$, where $W$ is the standard Wiener process \cite{KS_1991} independent of the initial condition $\xi(0)$. For any $t>0$, the PDF $\phi(t,q)$ of $\xi(t)$ satisfies the heat PDE \cite{E_1998} $\d_t \phi = \frac{1}{2}\d_q^2 \phi$, so that $\phi(2\beta,q) = \re^{\beta \d_q^2} \phi(0,q)$.
With the class of Gaussian PDFs being invariant under the heat  semigroup,
\begin{equation}
\label{opf2}
    \re^{-\beta \d_q^2}
    \Big(
    \sqrt{\frac{\gamma}{\pi}}
    f(q)
    \Big)
    =
    \sqrt{\frac{\wt{\gamma}}{\pi}}
    \re^{-\wt{\gamma}(q-\mu)^2},
    \qquad
    \wt{\gamma}
    :=
    \frac{\gamma}{1-4\beta \gamma},
\end{equation}
where the mean value $\mu$ remains unchanged, and  $\wt{\gamma}$
is found from the equation $\frac{1}{2\wt{\gamma}} + 2\beta= \frac{1}{2\gamma}$ which relates the variances  of the Gaussian PDFs under the action of the heat semigroup, provided $\gamma < \frac{1}{4\beta}$. Under the action of the right-hand side of (\ref{eee}) on (\ref{fgauss}), the  parameter $\gamma$ is transformed as
$$
    \gamma\mapsto \gamma - \alpha \mapsto \frac{\gamma-\alpha}{1-4\beta (\gamma-\alpha)}\mapsto \frac{\gamma-\alpha}{1-4\beta (\gamma-\alpha)}-\alpha,
$$
so that the corresponding  variances are all positive if and only if $\gamma$ satisfies (\ref{gammagood}).
Now,  consider the Gaussian-shaped functions under the action of the left-hand side of (\ref{eee}). To this end, we note that if the parameters $\gamma$, $\mu$ in (\ref{fgauss}) and an additional quantity $\sigma>0$ are smooth functions of time $t\>0$,  then $\psi(t,q):= \sigma f(q)$ satisfies
\begin{equation}
\label{fdot}
    \d_t \psi
    =
    (
    -\dot{\gamma} q^2
    + 2(\gamma \mu)^{^\centerdot} q
    +
    (\ln \sigma - \gamma \mu^2)^{^\centerdot})
    \psi,
\end{equation}
where $(\cdot)^{^\centerdot}:=- \d_t (\cdot)$ denotes the partial time derivative. Furthermore,
\begin{equation}
\label{f''}
  f''(q)
  =
  2\gamma (2\gamma q^2 -4\gamma \mu q +2\gamma\mu^2-1)f(q).
\end{equation}
From (\ref{fdot}) and (\ref{f''}), it follows that the function $\psi$ satisfies the PDE
\begin{equation}
\label{psiPDE}
    \d_t \psi
    =
(a q^2 - b \d_q^2) \psi
\end{equation}
if the functions $\gamma$, $\mu$ and $\sigma$ are governed by the set of three ODEs
\begin{align}
\label{dot1}
    \dot{\gamma}
    & =
    4b \gamma^2 - a,\\
\label{dot2}
    (\gamma \mu)^{^\centerdot}
    & =
    4b\gamma^2 \mu,\\
\label{dot3}
    (\ln\sigma - \gamma \mu^2)^{^\centerdot}
    & =
    2b\gamma (1-2\gamma \mu^2),
\end{align}
the first of which is a scalar Riccati equation. The solution of (\ref{dot1})--(\ref{dot3}) leads to a manifold of Gaussian-shaped solutions of the PDE (\ref{psiPDE}). Its comparison (at time $t=1$)  with the action of the operator on the right-hand side of (\ref{eee}) on (\ref{fgauss}), that is, $    (\re^{\alpha q^2}
    \re^{-\beta \d_q^2}
    \re^{\alpha q^2})(f)$,  computed using
 (\ref{opf1}) and (\ref{opf2}), completes the proof. The details of these calculations are tedious and are omitted.
\end{proof}
\end{document}